\newtheorem{theorem}{Theorem}
\newtheorem{corollary}[theorem]{Corollary}
\newtheorem{lemma}[theorem]{Lemma}
\newtheorem{observation}[theorem]{Observation}
\theoremstyle{definition}
\newtheorem{definition}[theorem]{Definition}
\theoremstyle{remark}
\newtheorem{remark}[theorem]{Remark}
\patchcmd{\@bibitem}{\ignorespaces}{\label{bib-#1}\ignorespaces}{}{}
\renewcommand{\vec}[1]{\bar{#1}}
\renewcommand{\phi}{\varphi}
\newcommand{\rset}{\mathcal{R}}
\newcommand{\erule}{\rho}
\newcommand{\apply}[1]{\function{apply}(#1)}
\newcommand{\head}[1]{\function{head}(#1)}
\newcommand{\body}[1]{\function{body}(#1)}
\newcommand{\Prop}{\mathrm{Prop}}
\newcommand{\mlang}{\mathcal{L}_{M}}
\newcommand{\forma}{\phi}
\newcommand{\formb}{\psi}
\newcommand{\dia}{\Diamond}
\newcommand{\logick}{\mathsf{K}}
\newcommand{\logict}{\mathsf{T}}
\newcommand{\logickfour}{\mathsf{K4}}
\newcommand{\logicsfour}{\mathsf{S4}}
\newcommand{\pdset}{\mathcal{P}}
\newcommand{\modw}{W}
\newcommand{\modr}{R}
\newcommand{\modv}{V}
\newcommand{\pdframe}{F(\pdset)}
\newcommand{\pdmodel}{M(\pdset)}
\newcommand{\pdp}{p}
\newcommand{\mforcepd}{\Vdash_{\pdset}}
\newcommand{\logicpd}{\mathsf{L}(\pdset)}
\newcommand{\logic}[1]{\mathsf{L}(#1)}
\newcommand{\md}[1]{d(#1)}
\newcommand{\negnnf}[1]{\dot{\neg}#1}
\newcommand{\mermodels}{\Vdash}
\newcommand{\Exists}[1]{\exists{\scalebox{0.9}{$#1$}}\;}
\newcommand{\Forall}[1]{\forall{\scalebox{0.9}{$#1$}}\;}
\newcommand{\function}[1]{\mathsf{#1}}
\newcommand{\ar}[1]{ar(#1)}
\newcommand{\predicate}[1]{#1} %\mathtt{#1}}
\newcommand{\ppred}{\predicate{P}}
\newcommand{\rpred}{\predicate{E}}
\newcommand{\apred}{\predicate{A}}
\newcommand{\bpred}{\predicate{B}}
\newcommand{\cpred}{\predicate{C}}
\newcommand{\epred}{\predicate{R}}
\newcommand{\fpred}{\predicate{F}}
\newcommand{\spred}{\predicate{S}}
\newcommand{\ourterms}{\mathrm{T}}
\newcommand{\constants}{\mathrm{C}}
\newcommand{\variables}{\mathrm{V}}
\newcommand{\signature}{\Sigma}
\newcommand{\pred}{\mathrm{Pred}}
\newcommand{\sig}{\signature}
\newcommand{\database}{\mathcal{D}}
\newcommand{\db}{\database}
\newcommand{\structure}[1]{\mathcal{#1}}
\newcommand{\instance}{\structure{I}}
\newcommand{\inst}{\instance}
\newcommand{\ppath}{P}
\newcommand{\jnstance}{\structure{J}}
\newcommand{\jnst}{\jnstance}
\newcommand{\set}[1]{\{\,#1\,\}}
\newcommand{\pair}[1]{\langle\, #1 \,\rangle}
\newcommand{\Vx}{\vec{x}}
\newcommand{\Vy}{\vec{y}}
\newcommand{\Vz}{\vec{z}}
\newcommand{\vy}{\vec{y}}
\newcommand{\vz}{\vec{z}}
\newcommand{\vt}{\vec{t}}
\newcommand{\chasesymbol}{Ch}
\newcommand{\chase}[1]{\chasesymbol(#1)}
\newcommand{\ruleset}{\mathcal{R}}
\newcommand{\rs}{\ruleset}
\newcounter{rscounter}
\newcounter{dbcounter}
\definecolor{lightblue}{rgb}{0.7, 0.7, 1}
\definecolor{lightred}{rgb}{1, 0.7, 0.7}
\newcommand{\piotr}[1]{\textcolor{lightred}{#1}}
\newcommand{\orange}[1]{{\color{orange} #1}}
\newcommand{\iffi}{\textit{iff} } %if and only if
\newcommand{\narule}[3]{#1 \to \Exists{#2}\; #3}
\newcommand{\nadrule}[2]{#1 \to #2}
\newcommand{\arule}[3]{#1 \;\;&\to\;\;\Exists{#2}\; #3}
\newcommand{\adrule}[2]{#1 \;\;&\to\;\;#2}
\newcommand{\theform}{\phi}
\newcommand{\core}{\function{core}}
\newcommand{\unravel}{\function{unravel}}
\newcommand{\unfold}{\function{unfold}}
\newcommand{\labels}{\function{labels}}
\newcommand{\trim}{\function{trim}}
\newcommand{\word}{\function{word}}
\newcommand{\empstr}{\varepsilon}
\newcommand{\embed}{\function{embed}}
\newcommand{\subform}{\function{ms}}
\newcommand{\map}{\mathcal{N}}
\newcommand{\axioms}{\pdset}
\newcommand{\rsp}{\rs_\theform}
\newcommand{\rsa}{\axioms} %\rs_\axioms}
\newcommand{\chases}{\chase{\dbz, \rsp \cup \rsa}}
\newcommand{\ch}{\mathcal{C}}
\newcommand{\dbz}{\db_{\phi}}
\newcommand{\kp}{k_+}
\newcommand{\kkp}{k \rightarrow k_+}
\newcommand{\mlbox}{\Box}
\newcommand{\mldia}{\dia}
\newcommand{\dep}[2]{\mathtt{dep}_{#1}(#2)}
\newcommand\sbullet[1][.5]{\mathbin{\vcenter{\hbox{\scalebox{#1}{$\bullet$}}}}}
\newcommand{\predcontr}{\mathsf{contr}}
\newcommand{\algcomplexity}{\textsc{ExpSpace}}
\newcommand{\assign}{\iota}
\begin{document}

%%
%% The "title" command has an optional parameter,
%% allowing the author to define a "short title" to be used in page headers.
\title{Decidability of Quasi-Dense Modal Logics} 

%%
%% The "author" command and its associated commands are used to define
%% the authors and their affiliations.
%% Of note is the shared affiliation of the first two authors, and the
%% "authornote" and "authornotemark" commands
%% used to denote shared contribution to the research.

%\iffalse
\author{Piotr Ostropolski-Nalewaja}
\affiliation{%
  \institution{Technische Universit\"at Dresden}
  %\streetaddress{}
  \city{Dresden}
  \country{Germany}}
\affiliation{%
\institution{University of Wrocław}
  %\streetaddress{}
  \city{Wrocław}
  \country{Poland}}
\email{postropolski@cs.uni.wroc.pl}
\orcid{0000-0002-8021-1638}

\author{Tim S. Lyon}
\affiliation{%
  \institution{Technische Universit\"at Dresden}
  %\streetaddress{}
  \city{Dresden}
  \country{Germany}}
\email{timothy_stephen.lyon@tu-dresden.de}
\orcid{0000-0003-3214-0828}

%\fi

%%
%% By default, the full list of authors will be used in the page
%% headers. Often, this list is too long, and will overlap
%% other information printed in the page headers. This command allows
%% the author to define a more concise list
%% of authors' names for this purpose.
%\renewcommand{\shortauthors}{Trovato et al.}

%%
%% The abstract is a short summary of the work to be presented in the
%% article.
\begin{abstract}
The decidability of axiomatic extensions of the modal logic $\logick$ with \emph{modal reduction principles}, i.e. axioms of the form $\dia^{k} p \rightarrow \dia^{n} p$, has remained a long-standing open problem. In this paper, we make significant progress toward solving this problem and show that decidability holds for a large subclass of these logics, namely, for \emph{quasi-dense logics}. Such logics are extensions of $\logick$ with with modal reduction axioms such that $0 < k < n$ (dubbed \emph{quasi-density axioms}). To prove decidability, we define novel proof systems for quasi-dense logics consisting of disjunctive existential rules, which are first-order formulae typically used to specify ontologies in the context of database theory. We show that such proof systems can be used to generate proofs and models of modal formulae, and provide an intricate model-theoretic argument showing that such generated models can be encoded as finite objects called \emph{templates}. By enumerating templates of bound size, we obtain an $\algcomplexity$ decision procedure as a consequence.
%OLD: The decidability of \emph{general path logics} has remained a long-standing open problem, going back $\sim$40 years. Such logics extend the modal logic $\logick$  with a finite number of \emph{general path axioms} of the form $\dia^{n} p \rightarrow \dia^{k} p$, which are called \emph{pseudo-density axioms} when $n \leq k$. In this paper, we make significant progress toward solving this problem and show that decidability holds for a large subclass of these logics, namely, for \emph{pseudo-dense logics}, which are extensions of $\logick$ with pseudo-density axioms. To prove decidability, we define novel proof systems for pseudo-dense logics consisting of disjunctive existential rules, which are first-order formulae typically used to specify ontologies in the context of database theory. We show that such proof systems can be used to generate (counter-)models of modal formulae, and provide an intricate model-theoretic argument showing that such (counter-)models can be encoded as finite objects called \emph{templates}. By enumerating templates of bound size, we obtain a $\mathrm{2EXPTIME}$ decision procedure as a consequence.
\end{abstract}

%%
%% The code below is generated by the tool at http://dl.acm.org/ccs.cfm.
%% Please copy and paste the code instead of the example below.
%%
\begin{CCSXML}
<ccs2012>
<concept>
<concept_id>10003752.10003790.10003793</concept_id>
<concept_desc>Theory of computation~Modal and temporal logics</concept_desc>
<concept_significance>500</concept_significance>
</concept>
<concept>
<concept_id>10003752.10003790.10003794</concept_id>
<concept_desc>Theory of computation~Automated reasoning</concept_desc>
<concept_significance>500</concept_significance>
</concept>
<concept>
<concept_id>10003752.10003790.10003796</concept_id>
<concept_desc>Theory of computation~Constructive mathematics</concept_desc>
<concept_significance>100</concept_significance>
</concept>
<concept>
<concept_id>10003752.10003790.10003792</concept_id>
<concept_desc>Theory of computation~Proof theory</concept_desc>
<concept_significance>100</concept_significance>
</concept>
</ccs2012>
\end{CCSXML}

\ccsdesc[500]{Theory of computation~Modal and temporal logics}
\ccsdesc[500]{Theory of computation~Automated reasoning}
\ccsdesc[100]{Theory of computation~Constructive mathematics}
\ccsdesc[100]{Theory of computation~Proof theory}

%%
%% Keywords. The author(s) should pick words that accurately describe
%% the work being presented. Separate the keywords with commas.
\keywords{Chase, Decidability, Existential rule, Kripke model, Modal logic, Modal reduction principle, Model theory, Quasi-density axiom}
%% A "teaser" image appears between the author and affiliation
%% information and the body of the document, and typically spans the
%% page.

%\received{20 February 2007}
%\received[revised]{12 March 2009}
%\received[accepted]{5 June 2009}

%%
%% This command processes the author and affiliation and title
%% information and builds the first part of the formatted document.
\maketitle

\section{Introduction}\label{sec:intro}

\iffalse
\piotr{Use those papers for disjunctive chase references:
\begin{itemize}
    \item Paper that uses restricted chase, a variant of which we use (Carral et al.) \cite{disjunctive-chase-termination}
    \item Paper that introduces disjunctive chase (Deutsch et al.) \cite{disjunctive-chase-intro}
    \item Some other papers \cite{disjunctive-chase-guardedness,disjunctive-chase-rewritings,disjunctive-chase-some-complexities,disjunctive-chase-some-paper}.
\end{itemize}}
\fi

%Importance of modal logics

Modal logic is a rich sub-discipline of mathematical logic, having important applications in numerous areas from computer science to legal theory. For example, temporal logics have been used in the verification of programs~\cite{AlpSch85}, epistemic logics have found use in distributed systems~\cite{MeyMeyHoe04}, and multi-agent logics have been used to analyze legal concepts~\cite{Bro11}. Modal logics are typically obtained by extending propositional logic with a set of modalities, which prefix formulae and qualify the truth of a proposition. In this paper, we focus on a class of normal modal logics~\cite{BlaRijVen01}, dubbed \emph{quasi-dense (modal) logics}, whose language extends propositional classical logic with the possibility modality $\dia$ and the necessity modality $\Box$.

%introduce modal reduction logics, what is known, open problem

Quasi-dense logics extend the minimal normal modal logic $\logick$ (cf. Blackburn et al.~\cite{BlaRijVen01}) with a finite set of \emph{quasi-dense axioms} of the form $\dia^k p \rightarrow \dia^n p$ such that $0 < k < n$.\footnote{The name `quasi-dense' is obtained from the fact that such axioms are canonical for a generalization of the density property. In particular, the axiom $\dia^k p \rightarrow \dia^n p$ is canonical for the following property: if a directed $R$ path of length $k$ exists from a world $w$ to $u$ in a Kripke model $M = (W,R,V)$, then a directed $R$ path of length $n$ exists from $w$ to $u$ as well. When $0 < k < n$, this implies the existence of $n-1$ points `between' $w$ and $u$ when an $R$ path of length $k$ exists, thus generalizing the usual density property.} This class of axioms forms a prominent subclass of the so-called \emph{modal reduction principles}~\cite{Ben76}, which are modal axioms of the same form, though without the restriction that $k < n$. Logics extending $\logick$ with modal reduction principles were studied at least as far back as 1976 by van Bentham~\cite{Ben76}. The decidability of such logics has remained a longstanding open problem (see~\cite{BlaBenWol07} for a discussion), though partial solutions have been obtained.
\iffalse
\begin{table*}
    \centering
    \includegraphics[width=0.7\linewidth]{}
    \caption{\orange{The details of the above figure should be inspected after the reader is familiar with \cref{def:rsp}, \cref{thm:sound,thm:complete}, and the preliminary section. The dashed blue edges are $\rpred$, while solid gray edges are $\epred$. On the left there is a partial depiction of the element of $\chase{\dbz, \rsp \cup \set{\dia^1q \to \dia^2q}}$ that does not entail $\predcontr$. However, such an element does not exists in $\chase{\dbz, \rsp \cup \set{\dia^1q \to \dia^2q, \dia^2q \to \dia^1q}}$. On the right there is one of the partial runs of the chase $\chase{\dbz, \rsp \cup \set{\dia^1q \to \dia^2q, \dia^2q \to \dia^1q}}$ that can derive $\predcontr$ in the next step. The binary atom not present on the left side is highlighted with dotted style and is responsible for deriving $\ppred{_\neg p}$ at the rightmost term of the figure.}}
    \label{fig:transitivity}
\end{table*}
\fi
%
For example, it is well-known that the decidability of $\logict$, $\logickfour$, $\logicsfour$, and $\logick \oplus \dia p \rightarrow \dia \dia p$ can be obtained via filtration techniques~\cite{BlaRijVen01} and each logic $\logick \oplus \dia^n p \rightarrow \dia p$ is known to be decidable as well (cf.~\cite{TiuIanGor12}). Notable however, is the result of Zakharyaschev~\cite{Zak92}, who showed that any extension of $\logickfour$ with modal reduction principles is decidable. In this paper, we obtain a new substantial result, showing all quasi-dense logics decidable via a novel model-theoretic argument. Thus, we provide additional tools for approaching the general problem and make further headway toward a complete solution.
%namely, we show that all pseudo-dense logics are decidable, thus making further headway toward a solution to the general problem.
%In this paper, we obtain yet another substantial result, showing that all pseudo-dense logics are decidable, thus making further headway toward a solution to the general problem.

%Moreover, the quasi-transitive case where left-hand side of modal reduction principles is restricted to paths of length one is also known to be decidable \cite{one-to-n-lit} \piotr{Alternatively: ...known to be decidable, a result which can be obtained by a straightforward argument involving Rabin’s Tree Theorem}.  Unfortunately, there appears to be a severe understanding gap between already known cases and the problem in general, highlighted by the lack of significant progress in the past three decades. We attribute this to the nonexistence of techniques that can tackle the problem while the requirement of transitivity (or its related forms) is lifted. In this paper, we obtain a result showing that all quasi-dense logics are decidable, thus passing a significant milestone toward a solution to the general problem.

Our approach to decidability is unique to the literature on modal logics in that we rely on tools from database theory to make our argument. In particular, we rely on \emph{disjunctive existential rules}---normally used in ontology specification---which are first-order formulae of the form $\Forall{\Vx, \Vy} \beta(\Vx, \Vy) \to \bigvee_{1 \leq i \leq n} \Exists{\Vz_{i}} \alpha_{i}(\Vy_{i}, \Vz_{i})$ such that $\beta(\Vx, \Vy)$ and each $\alpha_{i}(\Vy_{i}, \Vz_{i})$ are conjunctions of atomic formulae over constants and the variables $\Vx, \Vy$ and $\Vy_{i}, \Vz_{i}$, respectively~\cite{disjunctive-chase-intro}. In the context of database theory, a finite collection $\ruleset$ of such rules constitutes an \emph{ontology}, while a finite collection $\db$ of first-order atoms over constants constitutes a \emph{database}. Together, $(\db,\ruleset)$ form a \emph{knowledge base}, which may be queried to extract implicit information from the explicitly specified knowledge. The primary tool in querying a knowledge base $(\db,\ruleset)$ is the so-called \emph{(disjunctive) chase}, which is an algorithm that exhaustively applies rules from $\ruleset$ to $\db$, ultimately resulting in a selection of models of the knowledge base to which queries can be mapped~\cite{disjunctive-chase-intro,disjunctive-chase-termination}. 

In this paper, we rely on the above tools, representing a modal formula $\phi$ as a database $\dbz$ and encoding a tableau calculus with a set $\ruleset$ of disjunctive existential rules. Running the disjunctive chase over $(\dbz,\ruleset)$ simulates a tableau algorithm, effectively yielding a tableau proof or model of $\phi$. We find this approach to be preferable to utilizing a standard tableau approach for a couple of reasons: first, this saves us from introducing the various notions required to specify a tableau calculus, which only forms a minor part of our overall decidability argument. Second, the tools used in the context of (disjunctive) existential rules are already perfectly suited for the intricate model-theoretic arguments we employ. To put it succinctly, this approach is more parsimonious than it would otherwise be.

We note that the product of the disjunctive chase, which we denote by $\chase{\dbz,\ruleset}$, consists of a potentially infinite set of \emph{instances}, i.e. potentially infinite sets of atomic first-order formulae, which intuitively correspond to maximal branches in a tableau. When all such  instances contain a contradictory formula $\predcontr$, $\chase{\dbz,\ruleset}$ witnesses the unsatisfiability of $\phi$, and when at least one such instance omits $\predcontr$, that instance can be transformed into a Kripke model witnessing the satisfiability of $\phi$. Since $\chase{\dbz,\ruleset}$ can be infinitely large, the central issue that needs to be overcome to obtain decidability is the `finitization' of $\chase{\dbz,\ruleset}$. Indeed, we show that an instance $\ch \in \chase{\dbz,\ruleset}$ is free of $\predcontr$ \iffi a finite encoding of the model $\ch$ exists, which we refer to as a \emph{template}. As only a finite number of possible templates exist, we obtain a $\algcomplexity$ decision procedure for each quasi-dense logic as a corollary.

%Outline of paper

This paper is organized as follows: In \cref{sec:prelims}, we define the language and semantics of quasi-dense logics, preliminary notions for disjunctive existential rules, and the various model-theoretic tools required for our decidability argument. In \cref{sec:modal-ER}, we show that the disjunctive chase constitutes a sound and complete `proof system' for quasi-dense logics. In \cref{sec:erules}, we employ an intricate model-theoretic argument showing a correspondence between models generated by the disjunctive chase and finite encodings of such models (i.e. templates), yielding a $\algcomplexity$ decision procedure for quasi-dense logics. Last, in \cref{sec:conclusions}, we share some concluding remarks and discuss future work.

\newcommand{\tree}{\mathcal{T}}

\section{Preliminaries}\label{sec:prelims}

\subsection{Modal Logic}

We let $\Prop := \{p, q, r, \ldots\}$ be a set of \emph{propositional atoms} (which may be annotated), and define the \emph{modal language} $\mlang$ to be the smallest set of \emph{modal formulae} generated by the following grammar: % in BNF:
$$
\forma ::= p \ | \ \neg p \ | \ \forma \lor \forma \ | \ \forma \land \forma \ | \ \dia \phi \ | \ \Box \phi
$$
where $p$ ranges over $\Prop$. The negation $\negnnf{\phi}$ of a formula $\phi$ is defined recursively as shown below: %by replacing each positive (negative) propositional atom $p$ ($\neg p$) by $\neg p$ ($p$, resp.) as well as replacing each connective $\lor$, $\land$, $\dia$, and $\Box$ with its dual $\land$, $\lor$, $\Box$, and $\dia$, respectively.
\vspace*{-1em}
\begin{multicols}{2}
\begin{itemize}
    \item $\negnnf{p} := \neg p$;
    
    \item $\negnnf{\neg p} := p$;
    
    \item $\negnnf{(\psi \lor \chi)} := \negnnf{\psi} \land \negnnf{\chi}$;
    
    \item $\negnnf{(\psi \land \chi)} := \negnnf{\psi} \lor \negnnf{\chi}$;
    
    \item $\negnnf{\dia \psi} := \Box \negnnf{\psi}$;
    
    \item $\negnnf{\Box \psi} := \dia \negnnf{\psi}$.
\end{itemize}
\end{multicols}
\vspace*{-1em}

We define the \emph{modal depth} $\md{\phi}$ of a modal formula $\phi$ recursively as follows: (1) $\md{p} = \md{\neg p} = 0$, (2) $\md{\psi \circ \chi} = \max\{\md{\psi}, \md{\chi}\}$ for $\circ \in \{\lor, \land\}$, and (3) $\md{\triangledown \psi} = \md{\psi} + 1$ for $\triangledown \in \{\dia, \Box\}$. Given a modal formula $\forma$, define the set $\subform_i(\forma)$ of $\phi$'s \emph{depth $i$ modal sub-formulae} as follows: $\psi \in \subform_i(\forma)$ \iffi $\psi$ is a sub-formula of $\phi$, $\psi$ is of the form $\triangledown \chi$ with  $\triangledown \in \{\dia, \Box\}$, and $\md{\phi} - \md{\psi} = i$.

Modal formulae are interpreted over special kinds of Kripke models, which we call \emph{$\pdset$-models}, defined below. We note that $\pdset$ is taken to be a finite set of first-order formulae called \emph{quasi-density properties}. A \emph{quasi-density property} (or, \emph{QDP} for short) is a first-order formula $\pdp = \forall x,y (\modr^{k}(x,y) \rightarrow \modr^{n}(x,y))$ such that $0 < k < n$ and where $\modr^{\ell}(x,y) := (x = y)$ if $\ell = 0$ and $\modr^{\ell+1} := \exists z (\modr^{\ell}(x,z) \land \modr(z,y))$. We also use $k \to k_{+}$ to denote a QDP of the above form, where $n = k_{+}$ suggests that $k_{+}$ is larger than $k$. This notation will be used when $k$ should be made explicit and is relevant to the discussion.

\begin{definition}[$\pdset$-Model]\label{def:kripke-model}  Let $\pdset$ be a finite set of QDPs. A \emph{$\pdset$-frame} is defined to be an ordered pair $\pdframe := (\modw,\modr)$ such that $\modw$ is a non-empty set of points, called \emph{worlds}, and the \emph{accessibility relation} $\modr \subseteq \modw \times \modw$ satisfies every $\pdp \in \pdset$. A \emph{$\pdset$-model} is defined to be a tuple $\pdmodel = (\pdframe,\modv)$ such that $F$ is a $\pdset$-frame and $\modv : \Prop \to 2^{\modw}$ is a \emph{valuation function} mapping propositions to sets of worlds.
\end{definition}

\begin{definition}[Semantic Clauses]\label{def:modal-semantics} Let $\pdmodel = (\modw,\modr,\modv)$ be a $\pdset$-model. We define a forcing relation $\mforcepd$ such that
\begin{itemize}

\item $\pdmodel, w \mforcepd p$ \iffi $w \in \modv(p)$;

\item $\pdmodel, w \mforcepd \neg p$ \iffi $w \not\in \modv(p)$;

\item $\pdmodel, w \mforcepd \forma \lor \formb$ \iffi $\pdmodel, w \mforcepd \forma$ or $\pdmodel, w \mforcepd \formb$;

\item $\pdmodel, w \mforcepd \forma \land \formb$ \iffi $\pdmodel, w \mforcepd \forma$ and $\pdmodel, w \mforcepd \formb$;

\item $\pdmodel, w \mforcepd \dia \forma$ \iffi $\exists u \in \modw$, $\modr(w,u)$ and $\pdmodel, u \mforcepd \forma$;

\item $\pdmodel, w \mforcepd \Box \forma$ \iffi $\forall u \! \in \! \modw$, if $\modr(w,u)$, then $\pdmodel, u  \! \mforcepd  \! \forma$;

\item $\pdmodel \mforcepd \forma$ \iffi $\forall w \in \modw$, $\pdmodel, w \mforcepd \forma$.

\end{itemize}
A modal formulae $\forma \in \mlang$ is \emph{$\pdset$-valid}, written $\mforcepd \forma$, \iffi for all $\pdset$-models $\pdmodel$, $\pdmodel \mforcepd \forma$. We define the modal logic $\logicpd \subseteq \mlang$ to be the smallest set of $\pdset$-valid formulae, and note that $\logic{\emptyset}$ is the well-known minimal, normal modal logic $\logick$. 
\end{definition}

In this paper, we prove that each logic $\logicpd$ is decidable, that is, we provide a computable function $f$ such that for any quasi-dense logic $\logicpd$ and modal formula $\phi \in \mlang$, $f(\logicpd, \phi) = 1$ if $\phi \in \logicpd$ and $f(\logicpd, \phi) = 0$ if $\phi \not\in \logicpd$. We note that this provides a partial answer to a $\tilde{40}$ year old problem, first discussed by Wolter and Zakharyaschev~\cite[Problem 8]{BlaBenWol07}. Therefore, the main theorem we demonstrate is the following:

\begin{theorem}\label{thm:main} Given a logic $\logicpd$ and a modal formula $\forma$, it is decidable to check if $\phi \in \logicpd$.
\end{theorem}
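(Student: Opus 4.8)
\textbf{Proof plan for \cref{thm:main}.}

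The plan is to decide membership $\phi \in \logicpd$ by reducing the question to the satisfiability of the negation $\negnnf{\phi}$ over $\pdset$-models, and then to decide that satisfiability question via the disjunctive chase. Observe first that $\phi \in \logicpd$ iff $\mforcepd \phi$ iff $\negnnf{\phi}$ is not $\pdset$-satisfiable (i.e. no $\pdset$-model forces $\negnnf{\phi}$ at some world). So it suffices to exhibit a computable procedure that, given $\negnnf{\phi}$ and the finite set $\pdset$, determines whether $\negnnf{\phi}$ is $\pdset$-satisfiable. The route to such a procedure is the machinery the paper announces in the introduction: I would encode $\negnnf{\phi}$ as a database $\dbz$, build the rule set $\ruleset$ (the disjunctive existential rules simulating the tableau calculus together with the rules enforcing the QDPs in $\pdset$), and run the disjunctive chase to obtain $\chase{\dbz, \ruleset}$.

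Next I would invoke the soundness and completeness of this chase-as-proof-system, promised as \cref{thm:sound,thm:complete} in \cref{sec:modal-ER}. Concretely, the argument splits into the two outcomes flagged in the introduction. If every instance $\ch \in \chase{\dbz, \ruleset}$ contains the contradiction atom $\predcontr$, then $\negnnf{\phi}$ is $\pdset$-unsatisfiable, hence $\phi \in \logicpd$. If at least one instance $\ch$ omits $\predcontr$, then $\ch$ can be transformed into a $\pdset$-model witnessing the satisfiability of $\negnnf{\phi}$, hence $\phi \notin \logicpd$. The correctness of each direction is exactly what soundness and completeness buy us, so at this stage the logical equivalence between the membership question and the presence of a $\predcontr$-free instance is settled; the only remaining issue is effectiveness.

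The crux, and the main obstacle, is that $\chase{\dbz, \ruleset}$ may be infinite, both in the number of instances and in the size of each instance, so neither the search for a contradicting instance nor the search for a $\predcontr$-free one can be carried out by naive enumeration of the chase itself. Here I would lean on the central model-theoretic result developed in \cref{sec:erules}: an instance $\ch \in \chase{\dbz, \ruleset}$ is free of $\predcontr$ if and only if a finite encoding of $\ch$, called a \emph{template}, exists, and the size of any such template is bounded by a computable function of $\phi$ and $\pdset$. Granting this, decidability follows by \emph{enumerating all templates of the bounded size} and checking each for the properties that certify it encodes a genuine $\predcontr$-free instance of $\chase{\dbz, \ruleset}$. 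If some template passes the check, we output $f(\logicpd,\phi)=0$; if the (finite) search exhausts without success, then no $\predcontr$-free instance exists, so $\negnnf{\phi}$ is unsatisfiable and we output $f(\logicpd,\phi)=1$. Since only finitely many templates of bounded size exist and each check is decidable, this yields a terminating procedure, and a bookkeeping of the template-size bound shows it runs in $\algcomplexity$, establishing the claimed $\algcomplexity$ upper bound as well.

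The step I expect to be genuinely hard is not this top-level reduction but the template characterization it rests on: proving that a $\predcontr$-free instance generated by a possibly infinite, highly regular chase always admits a finite template and, conversely, that every valid template unfolds into such an instance. This is precisely the intricate model-theoretic correspondence deferred to \cref{sec:erules}, and it is where the genuine combinatorial and model-theoretic content of the paper lies; the proof of \cref{thm:main} proper is then a matter of assembling the soundness, completeness, and finite-template results into the decision procedure sketched above.
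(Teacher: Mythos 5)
Your proposal is correct and takes essentially the same route as the paper: reduce $\phi \in \logicpd$ to the (non)existence of a $\predcontr$-free instance in $\chase{\dbz, \rsp \cup \rsa}$ via \cref{thm:sound,thm:complete}, then decide that question by the template correspondence of \cref{lem:existential-decidability-problem} (i.e. \cref{cor:map-equiv-chase}) together with enumeration of the finitely many bounded-size templates. This matches the paper's assembly of the result, including the $\algcomplexity$ bound and the correct identification of the template characterization as the hard part.
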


\subsection{Existential Rules and Model-Theoretic Notions}

\noindent
\textbf{Formulae and Syntax.} We let $\constants$ and $\variables$ be two disjoint, denumerable sets of \emph{constants} and \emph{variables}. We use $a, b, c, \ldots$ to denote constants, and use $x, y, z, \ldots$ to denote variables (potentially annotated). We define the set of \emph{terms} $\ourterms$ to be the set of words over $\constants \cup \variables$ including the empty word $\empstr$;\footnote{Although this notion of term is non-standard, it is tailored to ease our proofs later on (see \cref{def:unfold} in particular).} we denote terms by $t$, $s$, $\ldots$ (potentially annotated). We use $\sbullet$ to denote the concatenation operation over terms.

%We let $\tlang$ be the set of \orange{\emph{terms} that is the set} of all words over $\ourterms$ \orange{including} $\empstr$ the empty word, that is, $\tlang$ is the smallest set such that $\empstr \in \tlang$ and if $\vec{s} \in \tlang$ and $t \in \ourterms$, then $\vec{s}t \in \tlang$. We denote words of terms as $\vec{s}$, $\vec{t}$, $\ldots$ (potentially annotated).

Moreover, we let $\pred := \{\predcontr\} \cup \{\apred, \bpred, \cpred, \ldots\}$ be a denumerable set of \emph{predicates} such that $\predcontr$ is a predicate of arity $0$ and the remaining predicates are of arity $1$ and $2$. The nullary predicate $\predcontr$ serves a special purpose in our work and will be used to encode contradictory information (see~\cref{sec:modal-ER}). We use $\ar{\apred} = n$ to denote that $\apred \in \pred$ is of arity $n \in \{0,1,2\}$. An \emph{atom} is a formula of the form $\predcontr$, $\apred(t)$, or $\bpred(t,t')$ such that $t, t' \in \ourterms$, $\ar{\apred} = 1$, and $\ar{\bpred} = 2$. We will often write atomic formulae as $\apred(\vec{t})$ with $\vec{t}$ denoting a single term $t$ or a pair of terms $t,t'$.

An instance $\inst$ is defined to be a (potentially infinite) set of atoms. We use $\inst$, $\jnst$, $\ldots$ (potentially annotated) to denote instances. We define $\apred \in \labels_{\inst}(t)$ \iffi $\apred(t) \in \inst$. When $\inst$ is clear from the context, we simply write $\labels(t)$ omitting the subscript $\inst$. We define a \emph{path} in an instance $\inst$ from a term $t$ to a term $t'$ to be a sequence of binary atoms $\ppath(t,t') = \apred_{1}(s_{0},s_{1}), \ldots, \apred_{n}(s_{n-1}, s_{n}) \in \inst$ for $n > 0$, where $t = s_0$ and $t' = s_n$. 
%, which is the \emph{nullary path} \orange{when} $t = t'$, i.e. single term $t$, when $n = 0$. 
 We define the \emph{length} of a path to be equal its cardinality. Given a path $\ppath(t,t')$ of the above form, we define the \emph{word of the path} to be $\word(\ppath(t,t')) = s_{0}\cdots s_{n-1}$; note that we omit the end term of the path in its word. %, which is the empty string $\empstr$ when $n = 0$. 
 An instance $\inst$ is a \emph{directed acyclic graph} (DAG) \iffi it is free of cycles, i.e. there exists no path from a term $t$ in $\inst$ to itself. A DAG $\inst$ is \emph{rooted} at a term $t$ \iffi for every other term $t'$ of $\inst$ there exists a path from $t$ to $t'$. A \emph{tree} is a rooted DAG such that each path from its root to any other term is unique. We define a \emph{multi-tree} to be like a tree, but allowing for multiple edges per pair of terms. We remark that every multi-tree is a rooted DAG. We use $\tree$ and annotated versions thereof to denote (multi-)trees, and given a tree $\tree$ and a term $t$ therein, we let $\tree_t$ denote the sub-tree of $\tree$ rooted at $t$. Moreover, we will employ standard graph-theoretic notions; e.g. the notion of a \emph{descendant} and the \emph{depth} of a tree. Given a rooted DAG $\inst$ with source $t$ containing a term $s$, we define the \emph{depth} of $s$ in $\inst$, denoted $\dep{\inst}{s}$, to be equal to the minimal length among all paths from $t$ to $s$ in $\inst$.\\
%\dep{\inst}{s} = \min\{|P(t,s)| : P(t,s) \text{ is a path from $t$ to $s$ in $\inst$}\}.

\noindent
\textbf{Substitutions and Homomorphisms.} 
We define a \emph{substitution} to be a partial function over $\ourterms$. A \emph{homomorphism} from an instance $\inst$ to an instance $\jnst$ is a substitution $h$ from the terms of $\inst$ to the terms of~$\jnst$ such that (1) if $\apred(t_1, \ldots, t_n) \in \inst$, then $\apred(h(t_1), \ldots, h(t_n)) \in \jnst$, and (2) $h(a) = a$, for each constant~$a \in \constants$. 
%If $\homm$ is a homomorphism from an instance $\inst$ to an instance $\jnst$, then we write $\inst \hismh \jnst$, and if some homomorphism $\homm$ exists such that $\inst \hismh \jnst$, we write $\inst \hism \jnst$.
A homomorphism $h$ from an instance $\inst$ to $\jnst$ is an \emph{isomorphism} \iffi it is bijective and $h^{-1}$ is also a homomorphism. 
%We write $\inst \cong \jnst$ \iffi an isomorphism exists between $\inst$ and $\jnst$. 
A homomorphism $h: \inst \rightarrow \jnst$ is a \emph{partial isomorphism} \iffi it is an isomorphism from $\inst$ to a sub-instance of $\jnst$ induced by the terms of $h(\inst)$.

We define the \emph{core} (up to isomorphism) of a finite instance $\inst$ to be the instance $\jnst = \core(\inst)$ such that $\jnst \subseteq \inst$, $\inst$ can be homomorphically mapped to $\jnst$, and if there exists a homomorphism $h$ from $\jnst$ to itself then $h$ is an isomorphism. We also define the core of a tree $\tree$ of finite depth in the same manner.\footnote{We remark that cores of infinite instances do not exist in general~\cite{HellNes92}, despite existing for infinite trees of finite depth.}\\ %\footnote{Although a finite instance may have multiple cores, as each are isomorphic}\\

\noindent
\textbf{Existential Rules.} A \emph{disjunctive existential rule} is a first-order formula of the form:
$$
\erule = \Forall{\Vx, \Vy} \scalebox{1.5}{$($}\; \beta(\Vx, \Vy) \to \bigvee_{1 \leq i \leq n} \Exists{\Vz_{i}} \alpha_{i}(\Vy_{i}, \Vz_{i}) \;\scalebox{1.5}{$)$}
$$
such that $\Vy = \Vy_{1}, \ldots, \Vy_{n}$ and $\beta(\Vx, \Vy) = \body{\erule}$ (called the body) is a conjunction of atoms over constants and the variables $\Vx, \Vy$, and $\bigvee_{1 \leq i \leq n} \alpha_{i}(\Vy_{i}, \Vz_{i}) = \head{\erule}$ (called the head) is a disjunction such that each $\alpha_{i}(\Vy_{i}, \Vz_{i})$ is a conjunction of atoms over constants and the variables $\Vy_{i}, \Vz_{i}$. As usual, we will often treat $\beta(\Vx, \Vy)$ and $\alpha_{i}(\Vy_{i}, \Vz_{i})$ as sets of atoms, rather than conjunctions of atoms; cf.~\cite{BagLecMugSal11}. 
%
%The \emph{frontier} of a disjunctive existential rule is the collection $\Vy$ of variables common to both the body and head. 
%
We call a finite set $\rset$ of disjunctive existential rules a \emph{rule set} and refer to disjunctive existential rules as rules for simplicity. A \emph{disjunctive datalog rule} is an rule without existential quantifiers. A rule is \emph{non-disjunctive} \iffi its head contains a single disjunct.\footnote{Non-disjunctive rules are also referred to as \emph{tuple-generating dependencies} (TGDs)~\cite{AbiteboulHV95}, \emph{conceptual graph rules}~\cite{SalMug96}, Datalog$^\pm$~\cite{Gottlob09}, and \emph{$\forall \exists$-rules}~\cite{BagLecMugSal11} in the literature, though the name \emph{existential rule} is most commonly used.}\\

\noindent
\textbf{Triggers and Rule Applications.}
Given an instance $\inst$, a rule $\erule$, and a homomorphism $h$ from $\body{\erule}$ to $\inst$ we call the pair $\pair{\erule, h}$ a \emph{trigger} in $\inst$. A trigger is \emph{active} if there exists no disjunct $\gamma$ of $\head{\erule}$ such that $h$ can be extended to map $\body{\erule} \land \gamma$ to $\inst$.

Given an active trigger $\pi = \pair{\erule, h}$ over some instance $\inst$, we define the \emph{application} of the trigger $\pi$ to the instance $\inst$ as the following set of instances:
$$\set{\inst \;\cup\; \alpha(h(\vy), \vt) \;\;\mid\;\; \Exists{\vz} \alpha(\vy, \vz) \text{ is a disjunct of } \head{\erule}}$$
where each $\vt$ is a tuple of fresh terms. We denote an application of a trigger to an instance by $\apply{\inst, \pi}$.\\ %\blue{If $\apply{\inst, \pi}$ is a singleton, then we let it denote the single element of the singleton; this situation arises, for example, when a non-disjunctive rule is applied.}\\

\noindent
\textbf{Disjunctive Chase.} We use a variant of the restricted disjunctive chase presented by Carral et al. \cite{disjunctive-chase-termination}.
Given an instance $\inst$ and a set of rules $\rs$ we define 
a \emph{fair derivation sequence} of $\inst$ and $\rs$ to be any (potentially infinite) sequence $\set{\inst_i}_{i \in \mathbb{N}}$ satisfying the following:
\begin{itemize}
    \item $\inst_0 = \inst$;
    \item there exists a trigger $\pi_i$ such that $\inst_{i+1} \in \apply{\inst_i, \pi_i}$, for every $i \in \mathbb{N}$;
    \item there exists no trigger $\pi$ that is active indefinitely.
\end{itemize}

Given an instance $\inst$ and a set of rules $\rs$ we define the \emph{chase} $\chase{\inst, \rs}$ as the set containing $\bigcup_{i \in \mathbb{N}}{\jnst_i}$ for every fair derivation sequence $\set{\jnst_i}_{i \in \mathbb{N}}$ of $\inst$ and $\rs$.\\ %\blue{If the rule set $\rs$ only contains non-disjunctive rules, then we let $\chase{\inst, \rs}$ denote its single element as $\chase{\inst, \rs}$ is a singleton.}\\

\noindent
\textbf{Formulae and Semantics.} We define a \emph{formula} $X$ to be an expression of the following form: $X: = \predcontr \ | \ \apred(\vec{t}) \ | \ X \lor X \ | \ X \land X$. The \emph{complexity} $|X|$ of a formula $X$ is recursively defined as follows: $|\predcontr| = |\apred(\vec{t})| = 0$ and $|X \circ Y| = |X| + |Y|+ 1$ for $\circ \in \{\lor, \land\}$. We let $\models$ denote the typical first-order entailment relation, which is defined between instances, rule sets, and formulae. Note that $\inst \models \ruleset$ \iffi there are no active $\rs$-triggers in $\inst$, and $\chase{\inst, \rs} \models X$ \iffi for each $\inst \in \chase{\inst, \rs}$, $\inst \models X$

\begin{observation}\label{obs:chase-preserves-homs}
For any two instances $\inst, \jnst$ and a set $\rs$ of non-disjunctive existential rules, if $\inst$ homomorphically maps to $\jnst$, then $\chase{\inst, \rs}$ homomorphically maps to  $\chase{\jnst, \rs}$.
\end{observation}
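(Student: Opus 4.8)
The plan is to fix an arbitrary element $\ch \in \chase{\inst, \rs}$, witnessed by a fair derivation sequence $\set{\inst_i}_{i \in \nats}$ with $\inst_0 = \inst$ and $\ch = \bigcup_{i} \inst_i$, together with an arbitrary element $\ch' \in \chase{\jnst, \rs}$ (one exists, since a fair derivation of $\jnst$ and $\rs$ can always be scheduled). I would then construct a homomorphism $g : \ch \to \ch'$ as the union of a coherent chain of homomorphisms $h_i : \inst_i \to \ch'$. Read this way, the conclusion says that every chase result of $\inst, \rs$ maps homomorphically into every chase result of $\jnst, \rs$, which for non-disjunctive rules is the natural meaning of ``$\chase{\inst, \rs}$ homomorphically maps to $\chase{\jnst, \rs}$''.

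First I would record the standard fact that a fair derivation sequence produces a model of the rule set. Every rule body is finite, so any $\rs$-trigger occurring in $\ch'$ already occurs in some finite stage $\jnst_m$ of the derivation witnessing $\ch'$; fairness forbids it from remaining active forever, so it is satisfied at some later stage and hence in $\ch' = \bigcup_j \jnst_j$. Thus $\ch' \models \rs$, i.e. $\ch'$ contains no active $\rs$-trigger.

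Next I would build the chain by induction on $i$. For the base case, composing the given homomorphism $h : \inst \to \jnst$ with the inclusion $\jnst \subseteq \ch'$ yields $h_0 : \inst_0 \to \ch'$ (constants being fixed throughout). For the inductive step, suppose $h_i : \inst_i \to \ch'$ is defined. The sequence applies an active trigger $\pi_i = \pair{\erule, f}$ to $\inst_i$; as $\rs$ is non-disjunctive, $\apply{\inst_i, \pi_i}$ is a singleton and $\inst_{i+1} = \inst_i \cup \alpha(f(\vy), \vt)$, where $\alpha(\vy,\vz) = \head{\erule}$ and $\vt$ is a tuple of fresh terms. Now $h_i \circ f$ maps $\body{\erule}$ into $\ch'$, so $\pair{\erule, h_i \circ f}$ is a trigger in $\ch'$; since $\ch' \models \rs$ it is not active, meaning $h_i \circ f$ extends to a homomorphism $\eta$ sending the existential variables $\vz$ to terms of $\ch'$ with $\alpha(\eta(\vy), \eta(\vz)) \subseteq \ch'$. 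I then set $h_{i+1}$ to agree with $h_i$ on the terms of $\inst_i$ and to send each fresh term in $\vt$ to the corresponding value of $\eta$ on $\vz$; freshness of $\vt$ makes this well defined, and a direct check on the old atoms (handled by $h_i$) and the new head atoms (handled by $\eta$, using $h_{i+1}(f(\vy)) = \eta(\vy)$) shows that $h_{i+1} : \inst_{i+1} \to \ch'$ is a homomorphism extending $h_i$.

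Finally, since each $h_{i+1}$ restricts to $h_i$, the chain is coherent and $g := \bigcup_i h_i$ is a well-defined substitution fixing constants; every atom of $\ch = \bigcup_i \inst_i$ lies in some $\inst_i$ and is therefore mapped into $\ch'$, so $g : \ch \to \ch'$ is the required homomorphism. The main obstacle is not any single computation but the careful handling of the infinitary and nondeterministic aspects: establishing $\ch' \models \rs$ from fairness, and, crucially, exploiting non-disjunctiveness so that each step extends a single homomorphism into one fixed target $\ch'$. For a disjunctive rule set this determinism breaks down---an application branches over disjuncts and a single chase result need not be a universal model---which is exactly why the statement is restricted to non-disjunctive rules.
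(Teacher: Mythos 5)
Your proof is correct. The paper states this observation without proof (it is the standard universality property of the non-disjunctive chase), and your argument---first establishing that any chase result $\ch' \in \chase{\jnst, \rs}$ satisfies $\rs$ via fairness and monotonicity of the derivation sequence, then lifting the given homomorphism stage by stage along the derivation of $\ch \in \chase{\inst, \rs}$ using non-activity of triggers in $\ch'$, and finally taking the union of the coherent chain---is precisely the canonical argument one would give, including the correct identification of non-disjunctiveness as the point where the argument would fail for disjunctive rules.
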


\section{From Modal Logic to Rules}\label{sec:modal-ER}

For the remainder of the paper, let us fix a set $\axioms$ of quasi-density properties and a modal formula $\theform$. We let the signature $\sig$ be the set $\set{\ppred_{\psi} \mid \psi \text{ is a sub-formula of } \theform} \cup \set{\epred, \rpred, \predcontr}$ such that each predicate $\ppred_{\psi}$ is unary, and both $\epred$ and $\rpred$ are binary. We will often refer to unary atoms as \emph{labels} of terms. Below, we define a set of disjunctive existential rules that effectively function like a tableau system for $\phi$.\footnote{The relationship between existential rules and sequent systems (which are well-known to be equivalent to tableaux) has been studied previously~\cite{LyoOst22}.} This set is comprised of two distinct parts: the QDPs in $\axioms$ and a set $\rsp$ that encodes the semantic clauses from \cref{def:modal-semantics}. By standard first-order equivalences, every QDP of the form $\forall x,y (\modr^{k}(x,y) \rightarrow \modr^{n}(x,y))$ is equivalent to a non-disjunctive existential rule of the form:
%\vspace{-1em}
$$
\forall x, \Vec{y}, y R(x,y_{1}) \land \cdots \land R(y_{k-1},y) \rightarrow \exists \Vec{z} R(x,z_{1}) \land \cdots \land R(z_{n-1},y)
$$
where $\Vec{y} = y_{1}, \ldots, y_{k}$ and $\Vec{z} = z_{1}, \ldots, z_{n}$. Therefore, we are permitted to treat $\axioms$ as a rule set. The second rule set $\rsp$, which encodes the modal semantics, is defined as follows:

\begin{definition}\label{def:rsp}
    Let $\rsp$ be the rule set consisting of the following:
    \begin{align*}
        \adrule{\ppred_{\psi \land \chi}(x)}{\ppred_{\psi}(x) \land \ppred_{\chi}(x)}\tag{i}\label{rule:head-conjunction}\\
        \adrule{\ppred_{\psi \lor \chi}(x)}{\ppred_{\psi}(x) \lor \ppred_{\chi}(x)}\tag{ii}\label{rule:head-disjunction}\\
        \adrule{\ppred_{\mlbox\psi}(x) \land \epred(x, y)}{\ppred_{\psi}(y)}\tag{iii}\label{rule:dia}\\
        \arule{\ppred_{\mldia\psi}(x)}{y}{\epred(x,y) \land \rpred(x, y) \land \ppred_{\psi}(y)}\tag{iv}\label{rule:box}\\
        \adrule{\ppred_{\psi}(x) \land \ppred_{\negnnf{\psi}}(x)}{\predcontr}\tag{v}\label{rule:contr}
    \end{align*}
    for $\psi, \chi$ ranging over sub-formulae of $\theform$.
\end{definition}

The above rule set encodes the the semantic clauses from \cref{def:modal-semantics} in the following way: Rules (i) and (ii) encode the fact that if $\psi \land \chi$ or $\psi \lor \chi$ is satisfied at a world $x$ in a Kripke model, then $\phi$ and (or) $\chi$ is satisfied at $x$. Rule (iii) encodes the fact that if $\Box \psi$ is satisfied at a world $x$, which relates to a world $y$ via the accessibility relation $R$, in a Kripke model, then $\psi$ holds at $y$. Rule (iv) captures the fact that if $\dia \psi$ is satisifed at $x$ in a Kripke model, then there exists a world $y$ such that $R(x,y)$ with $\psi$ holding at $y$. Observe that the rule (iv) includes the additional binary predicate $E$. This predicate has a technical purpose and is used to mark which binary predicates $R$ are introduced by rules of the form (iv) (as opposed to rules from $\axioms$) during the chase. The usefulness of the predicate $E$ will become apparent in the subsequent section. Last, rule (v) expresses that the satisfaction of contradictory formulae implies a contradiction, which we denote by the special nullary predicate $\predcontr$. We interpret bodies and heads of rules from $\rsp \cup \axioms$ accordingly:

\begin{definition} Let $\pdmodel = (W,R,V)$ be a $\pdset$-model with $\assign : \ourterms \to W$ a partial function called an \emph{assignment}, and $X$, $Y$ formulae over the signature $\sig$. %We define $\mermodels$ as:
Then, 
\begin{itemize}

\item $\pdmodel,\assign\not\mermodels \predcontr$; 

\item $\pdmodel,\assign\mermodels \ppred_{\psi}(t)$ \iffi $\pdmodel, \assign(t) \mforcepd \psi$; 
    
\item $\pdmodel,\assign\mermodels \fpred(t,s)$ \iffi $(\assign(t),\assign(s)) \in R$ for $\fpred \in \set{\rpred, \epred}$;

\item $\pdmodel,\assign\mermodels X {\land} Y$ \iffi $\pdmodel,\assign \mermodels X$ and $\pdmodel,\assign \mermodels Y$; 

\item $\pdmodel,\assign \mermodels X {\lor} Y$ \iffi $\pdmodel,\assign \mermodels X$ or $\pdmodel,\assign \mermodels Y$.
    
%\item $\pdmodel,\assign \! \mermodels \! \apred(\vec{t}) {\land} \bpred(\vec{s})$ \! \iffi \! $\pdmodel,\assign \! \mermodels \! \apred(\vec{t})$ and $\pdmodel,\assign \! \mermodels \! \bpred(\vec{s})$; 

%\item $\pdmodel,\assign \! \mermodels \! \apred(\vec{t}) {\lor} \bpred(\vec{s})$ \! \iffi \! $\pdmodel,\assign \mermodels \apred(\vec{t})$ or $\pdmodel,\assign \mermodels \bpred(\vec{s})$.

\end{itemize}
If there exists an assignment $\assign$ such that $\pdmodel,\assign \mermodels X$, then we write $\pdmodel \mermodels X$.
\end{definition}

We now demonstrate that when the chase is run with $\rsp \cup \axioms$ over the database $\dbz = \set{\ppred_{\phi}(a)}$ with $a$ a fixed constant, the resulting set $\chases$ witnesses the unsatisfiability of $\phi$ if for each $\ch \in \chases$, $\predcontr \in \ch$, and witnesses the satisfiability of $\phi$ otherwise. This soundness (\cref{thm:sound}) and completeness (\cref{thm:complete}) result is proven below.

\begin{figure*}[t]
    \centering
\includegraphics[width=0.7\linewidth]{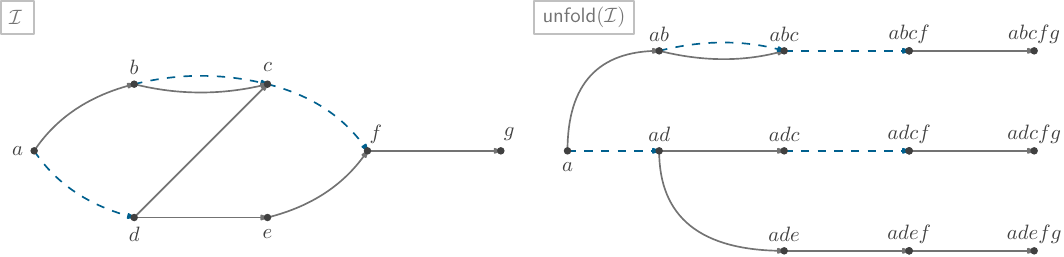}
    \caption{Example unfolding of a rooted DAG $\inst$. The dashed blue edges are $\rpred$, while solid gray edges are $\epred$. Note how terms in the unfolded instance are uniquely identified by paths from the root of $\inst$.}
    \label{fig:unfold}
\end{figure*}

\begin{lemma}\label{lem:sound-step} Let $\pdmodel = (\modw, \modr, \modv)$ be a $\pdset$-model and $\erule \in \rsa \cup \rsp$. If $\pdmodel \mermodels \body{\erule}$, then $\pdmodel \mermodels \head{\erule}$.
\end{lemma}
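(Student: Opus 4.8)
The plan is to prove the lemma by a case analysis on whether $\erule$ is one of the QDP rules in $\rsa$ or one of the five rule schemas (i)--(v) constituting $\rsp$, verifying in each case that the rule is semantically sound against \cref{def:modal-semantics} and the interpretation $\mermodels$. Throughout, I would unfold the hypothesis $\pdmodel \mermodels \body{\erule}$ into the existence of an assignment $\assign$ mapping the body variables to worlds so that every body atom holds, and then exhibit a (possibly extended) assignment witnessing some disjunct of $\head{\erule}$.

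First, for $\erule \in \rsa$ corresponding to a QDP $\forall x,y(\modr^{k}(x,y) \to \modr^{n}(x,y))$: if $\assign$ satisfies the body $\epred(x,y_1) \land \cdots \land \epred(y_{k-1},y)$, then its images form a directed $\modr$-path of length $k$ from $\assign(x)$ to $\assign(y)$, i.e. $\modr^{k}(\assign(x),\assign(y))$ holds. Since $\pdmodel$ is a $\pdset$-model, its frame satisfies every QDP (\cref{def:kripke-model}), so $\modr^{n}(\assign(x),\assign(y))$ holds, yielding intermediate worlds $w_1, \ldots, w_{n-1}$ forming a length-$n$ path; extending $\assign$ by $\assign(z_i) := w_i$ then satisfies the head. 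For rules (i)--(iv) of $\rsp$, the argument is a direct rewriting of the relevant clause of \cref{def:modal-semantics} through the definition of $\mermodels$. For instance, (iii) uses that $\pdmodel,\assign \mermodels \ppred_{\Box\psi}(x) \land \epred(x,y)$ unpacks to $\pdmodel,\assign(x) \mforcepd \Box\psi$ together with $(\assign(x),\assign(y)) \in \modr$, whence the $\Box$-clause gives $\pdmodel,\assign(y) \mforcepd \psi$, that is $\pdmodel,\assign \mermodels \ppred_\psi(y)$; rule (iv) is dual, where the witness $u$ supplied by the $\dia$-clause is used to extend $\assign$ to the existentially quantified variable $y$, noting that both $\epred$ and $\rpred$ are interpreted as $\modr$.

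The one case requiring an auxiliary argument is rule (v), whose body is $\ppred_\psi(x) \land \ppred_{\negnnf{\psi}}(x)$ and whose head is $\predcontr$, which by definition is never satisfied by any pair $\pdmodel,\assign$. Soundness of (v) therefore reduces to showing that its body is unsatisfiable, i.e. that no world can force both $\psi$ and $\negnnf{\psi}$. I would establish this via the key sub-claim that $\negnnf{(\cdot)}$ computes a genuine semantic negation: for every $\pdset$-model $\pdmodel$, world $w$, and subformula $\psi$ of $\theform$, we have $\pdmodel, w \mforcepd \negnnf{\psi}$ iff $\pdmodel, w \not\mforcepd \psi$. This is proven by induction on the structure of $\psi$, with the base cases $p$ and $\neg p$ reading off the valuation clauses, the $\land$ and $\lor$ cases using the De Morgan definitions of $\negnnf{(\cdot)}$, and the $\dia$ and $\Box$ cases using that $\negnnf{\dia\psi} = \Box\negnnf{\psi}$ (and dually) together with the induction hypothesis applied at each accessible world. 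Given this sub-claim, $\pdmodel,\assign(x) \mforcepd \psi$ and $\pdmodel,\assign(x) \mforcepd \negnnf{\psi}$ cannot hold simultaneously, so the body of (v) is never satisfied and the implication is vacuously sound.

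I expect this induction---specifically aligning the $\dia$ and $\Box$ cases of the De Morgan negation with the quantifier alternation in the corresponding semantic clauses---to be the only genuinely nontrivial step; every other case is a direct unfolding of definitions, and the QDP case is a straightforward transfer of the frame condition through the path translation $\modr^{\ell}$.
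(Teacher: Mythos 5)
Your proposal is correct and follows the same overall structure as the paper's proof: a case analysis over $\erule \in \rsa \cup \rsp$, with the QDP case handled by transferring the frame condition of \cref{def:kripke-model} through the path translation $\modr^{\ell}$ and extending the assignment to the existential variables, and rules (i)--(iv) handled by unfolding the corresponding clauses of \cref{def:modal-semantics}; your treatments of case (iv) and the QDP case coincide with the two cases the paper writes out explicitly. Where you go beyond the paper is rule (v): the paper dismisses all unlisted cases with ``the remaining cases are similar,'' but as you rightly observe, (v) is precisely the case that is \emph{not} similar --- its head $\predcontr$ is unsatisfiable by definition, so soundness can only hold vacuously, and this requires showing that no world forces both $\psi$ and $\negnnf{\psi}$. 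Your inductive sub-claim that $\pdmodel, w \mforcepd \negnnf{\psi}$ \iffi $\pdmodel, w \not\mforcepd \psi$ (where in fact the single implication $\pdmodel, w \mforcepd \negnnf{\psi} \Rightarrow \pdmodel, w \not\mforcepd \psi$ already suffices, and propagates through the $\dia$/$\Box$ cases on its own) is exactly the missing ingredient, and it is standard but genuinely necessary: since the language is in negation normal form, $\negnnf{(\cdot)}$ is a syntactic operation whose semantic adequacy is not immediate from \cref{def:modal-semantics}. So your proof is, if anything, more complete than the paper's; the only caution is not to overstate the difficulty --- the induction is routine --- but identifying that it is needed at all is the right instinct.
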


\begin{proof} We prove the result by a case distinction on $\erule \in \rsa \cup \rsp$. We consider the cases where $\erule$ is of form (iv) in \cref{def:rsp} and the case where $\erule \in \pdset$, and note that the remaining cases are similar.

First, let $\erule = \narule{\ppred_{\mldia\psi}(x)}{y}{\epred(x,y) \land \rpred(x, y) \land \ppred_{\psi}(y)}$ and suppose $\pdmodel, \assign \mermodels \ppred_{\mldia\psi}(x)$ with $\assign$ an assignment. Then, $\pdmodel, \assign(x) \mforcepd \mldia\psi$, implying the existence of a world $u \in W$ such that $(\assign(x),u) \in R$ and $\pdmodel, u \mforcepd \psi$. Let us define $\assign'$ such that $\assign'(t) = \assign(t)$ if $t \neq y$ and $\assign'(y) = u$. Then, it follows that $\pdmodel, \assign' \mermodels \epred(x,y) \land \rpred(x, y) \land \ppred_{\psi}(y)$.

For the second case, suppose $\erule = \forall x,y (\modr^{k}(x,y) \rightarrow \modr^{n}(x,y))$ with $\pdmodel,\assign \mermodels \modr^{k}(x,y)$ with $\assign$ an assignment. Since $\pdmodel$ is a $\pdset$-model, it immediately follows that there exist worlds $u_{1}, \ldots, u_{k-1} \in W$ such that $(\assign(x),u_{1}), \ldots, (u_{k-1},\assign(y)) \in R$. Let $z_{1}, \ldots, z_{n-1}$ be the existential variables in $\modr^{n}(x,y)$, and let us define $\assign'(t) = \assign(t)$ if $t \not\in \{u_{1}, \ldots, u_{n-1}\}$ with $\assign'(z_{i}) = u_{i}$ otherwise. Then, it is clear that $\pdmodel, \assign' \mermodels \modr^{n}(x,y)$.
\end{proof}

\begin{theorem}[Soundness]\label{thm:sound} If for each $\ch \in \chases$, $\predcontr \in \ch$, then $\mforcepd \negnnf{\phi}$.
\end{theorem}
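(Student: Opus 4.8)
The plan is to argue by contraposition: assuming $\not\mforcepd \negnnf\phi$, I will produce some $\ch \in \chases$ with $\predcontr \notin \ch$, contradicting the hypothesis. First I would invoke the standard negation-normal-form fact that $\pdmodel, w \mforcepd \negnnf\psi$ iff $\pdmodel, w \not\mforcepd \psi$ (a routine induction on $\psi$ over the grammar of $\mlang$). Hence from $\not\mforcepd \negnnf\phi$ I obtain a $\pdset$-model $\pdmodel = (W,R,V)$ and a world $w \in W$ with $\pdmodel, w \mforcepd \phi$. Fixing the assignment $\assign_0$ with $\assign_0(a) = w$, this gives $\pdmodel, \assign_0 \mermodels \ppred_\phi(a)$, i.e.\ the model satisfies the initial database $\dbz$.

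The core of the argument is to construct, along a single fair derivation sequence $\{\inst_i\}_{i\in\mathbb{N}}$ of $\dbz$ and $\rsp \cup \rsa$, a chain of assignments $\assign_0 \subseteq \assign_1 \subseteq \cdots$ such that $\pdmodel, \assign_i \mermodels \alpha$ for every atom $\alpha \in \inst_i$, i.e.\ the fixed model $\pdmodel$ satisfies each instance under one uniform assignment. I would prove this by induction on $i$. For the inductive step, suppose a trigger $\pi_i = \pair{\erule, h}$ is applied to $\inst_i$. Since $h$ maps $\body{\erule}$ into $\inst_i$ and $\pdmodel, \assign_i$ satisfies $\inst_i$, the composite $\assign_i \circ h$ witnesses $\pdmodel, \assign_i \circ h \mermodels \body{\erule}$. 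The proof of \cref{lem:sound-step} does more than its statement records: it shows that any body assignment extends to an assignment satisfying some disjunct of $\head{\erule}$. I would therefore select, among the instances in $\apply{\inst_i, \pi_i}$, the one obtained from a disjunct $\Exists{\vz}\alpha(\vy,\vz)$ that $\pdmodel$ can satisfy, and define $\assign_{i+1}$ to extend $\assign_i$ by sending the fresh terms $\vt$ to the witnessing worlds. Because the fresh terms do not occur in $\inst_i$, the extension leaves all earlier atoms satisfied, so $\pdmodel, \assign_{i+1}$ satisfies $\inst_{i+1} = \inst_i \cup \alpha(h(\vy), \vt)$.

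For this construction to yield an element of $\chases$, the derivation sequence must be fair, so I would interleave the model-guided disjunct choices with a standard fair scheduling of triggers: applying a trigger $\pair{\erule, h}$ always deactivates it (the added disjunct extends $h$ as required), so processing triggers in a fair order guarantees no trigger stays active indefinitely. The disjunct selection affects only which atoms appear, not whether a trigger is eventually applied, so fairness and the invariant can be maintained simultaneously. Taking $\ch := \bigcup_{i} \inst_i$ and $\assign := \bigcup_i \assign_i$, every atom of $\ch$ lies in some $\inst_i$ and is satisfied by $\assign$, so $\pdmodel, \assign \mermodels \ch$; since $\pdmodel, \assign \not\mermodels \predcontr$ always holds, we conclude $\predcontr \notin \ch$, contradicting the assumption that every $\ch \in \chases$ contains $\predcontr$.

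The main obstacle, I expect, is the simultaneous maintenance of fairness and the model-satisfaction invariant across the infinite construction: the standard fair chase is agnostic about disjunct choices, whereas here the choices are dictated by the model, and I must confirm that constraining the choices this way never forces the sequence to be unfair and never breaks the single-uniform-assignment requirement when passing to the limit. The remaining ingredients --- the negation-normal-form equivalence, the assignment bookkeeping for existentially introduced terms, and the observation that $\predcontr$ is never forced --- are routine given \cref{lem:sound-step}.
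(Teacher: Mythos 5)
Your proposal is correct and takes essentially the same route as the paper: both arguments use \cref{lem:sound-step} to let the countermodel of $\negnnf{\phi}$ guide the disjunct choices along a derivation sequence, so that the model satisfies every atom of the resulting chase instance and hence that instance cannot contain $\predcontr$, contradicting the hypothesis. The paper compresses the induction, the assignment bookkeeping, and the fairness issue into a single sentence, so the details you supply---in particular the observation that the \emph{proof} of \cref{lem:sound-step} yields an assignment-extension property stronger than its bare statement, which is what the chaining actually requires---are precisely what the paper leaves implicit.
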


\begin{proof} Suppose for a contradiction that $\predcontr \in \ch$ for each $\ch \in \chases$, but $\not\mforcepd \negnnf{\phi}$. Then, there exists a $\pdset$-model $\pdmodel$ such that $\pdmodel \mermodels \ppred_{\phi}(a)$. By \cref{lem:sound-step} and the fact that each step of $\chases$ only applies rules from $\rsa \cup \rsp$, we know that $\pdmodel \mermodels \predcontr$, giving a contradiction, meaning $\mforcepd \negnnf{\phi}$.
\end{proof}

\begin{theorem}[Completeness]\label{thm:complete} If $\mforcepd \negnnf{\phi}$, then for each $\ch \in \chases$, $\predcontr \in \ch$.
\end{theorem}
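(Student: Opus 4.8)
The plan is to prove the contrapositive: assuming some $\ch \in \chases$ omits $\predcontr$, I will read a $\pdset$-model of $\phi$ directly off $\ch$, rooted at the constant $a$, and thereby contradict the $\pdset$-validity of $\negnnf{\phi}$. The first step is to record the one fact that drives the whole argument: since $\ch$ is the union of a fair derivation sequence, no trigger is active in $\ch$ (a trigger active in $\ch$ depends on finitely many atoms, hence would be active in some $\jnst_i$ and, since atoms only accumulate, active indefinitely). By the characterization noted after the definition of $\models$, this means $\ch \models \rsp \cup \axioms$.

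Next I extract a candidate model from $\ch$. Take $\modw$ to be the set of terms occurring in $\ch$, set $\modr := \set{(t,s) \mid \epred(t,s) \in \ch}$, and set $\modv(p) := \set{t \in \modw \mid \ppred_p(t) \in \ch}$ for each $p \in \Prop$ (this is empty when $p$ does not occur in $\phi$). Here it is essential that the accessibility relation is read off $\epred$, the predicate playing the role of $R$, whereas $\rpred$ merely flags those $R$-edges created by rule (iv). Because $\ch \models \axioms$ and each quasi-density property in $\axioms$ is expressed as an existential rule over $\epred$, the relation $\modr$ satisfies every QDP; hence $(\modw,\modr)$ is a $\pdset$-frame, and $\pdmodel = (\modw,\modr,\modv)$ is a $\pdset$-model (with $\modw \ni a$ nonempty).

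The heart of the argument is a truth lemma, proved by induction on a sub-formula $\psi$ of $\phi$: if $\ppred_\psi(t) \in \ch$ then $\pdmodel, t \mforcepd \psi$. The case $\psi = p$ is immediate from the definition of $\modv$; for $\psi = \neg p$ I use rule (v) together with $\predcontr \notin \ch$, which forbids $\ppred_p(t)$ and $\ppred_{\neg p}(t)$ from co-occurring, so $t \notin \modv(p)$. The Boolean cases follow from $\ch \models$ rules (i) and (ii): rule (i) forces both conjuncts to label $t$, while the disjunctive rule (ii) guarantees, since its trigger is inactive in $\ch$, that at least one disjunct already labels $t$; the induction hypothesis then applies to the strictly smaller sub-formula. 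The two modal cases are where the binary predicates matter: for $\Box\chi$, rule (iii) fires on every $\epred$-edge leaving $t$, so each $\modr$-successor of $t$ carries the label $\ppred_\chi$ and inductively forces $\chi$; for $\dia\chi$, rule (iv) supplies a successor $s$ with $\epred(t,s), \ppred_\chi(s) \in \ch$, which is exactly the required $\modr$-witness. Finally, since $\dbz = \set{\ppred_\phi(a)} \subseteq \ch$, the truth lemma yields $\pdmodel, a \mforcepd \phi$; by the standard equivalence $\pdmodel, w \mforcepd \phi$ \iffi $\pdmodel, w \not\mforcepd \negnnf{\phi}$ (routine induction on the $\negnnf{\cdot}$ clauses), we obtain $\pdmodel \not\mforcepd \negnnf{\phi}$, contradicting the hypothesis $\mforcepd \negnnf{\phi}$.

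I expect the main obstacle to be the coherence of the modal cases with the quasi-density machinery, rather than any single computation. Concretely, the $\Box$-case must propagate $\ppred_\chi$ along \emph{all} $\modr$-successors, including the edges freshly introduced by the QDP rules in $\axioms$; this works precisely because those QDP-generated edges are ordinary $\epred$-atoms, so rule (iii) sees them and $\ch \models$ rule (iii) does the rest. The second delicate point is the disjunction case, which is exactly where passing to a single branch $\ch$ (rather than reasoning about $\chases$ as a whole) is indispensable: one must be explicit that inactivity of the disjunctive trigger in $\ch$ pins down a definite choice of disjunct within that branch. Both points hinge on restating fairness as $\ch \models \rsp \cup \axioms$, so I would establish that reformulation cleanly up front and then let each inductive case be a direct appeal to the corresponding rule.
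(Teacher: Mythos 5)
Your proof is correct and follows essentially the same route as the paper: contraposition, reading a $\pdset$-model off a $\predcontr$-free branch $\ch$ (worlds are the terms of $\ch$, accessibility is given by the $\epred$-atoms, valuation by the labels $\ppred_{p}$), and a truth lemma by induction on sub-formulae, concluding $\pdmodel, a \mforcepd \phi$ and hence $\not\mforcepd \negnnf{\phi}$. The paper only sketches this (it verifies the $\dia$ case and declares the rest similar), so your explicit reformulation of fairness as $\ch \models \rsp \cup \axioms$ and your treatment of the $\neg p$, Boolean, and $\Box$ cases are a faithful filling-in of the same argument rather than a different approach.
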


\begin{proof} We prove the result by contraposition and assume that there exists a $\ch \in \chases$ such that $\predcontr \not\in \ch$. We now define a $\pdset$-model $\pdmodel = (W,R,V)$ from $\ch$ such that $W$ contains all terms in $\ch$, $(t,s) \in R$ \iffi $R(t,s) \in \ch$, and $t \in V(p)$ \iffi $\ppred_{p}(t) \in \ch$. Let us now verify that $\pdmodel$ is indeed a $\pdset$-model. First, observe that $W \neq \emptyset$ as it contains the constant $a$. Second, observe that by the definition of $\chases$, we have that $R$ satisfies each QDP in $\pdset$. Last, we note that $V$ is well-defined, i.e. it cannot be the case that $t \in V(p)$ and $t \not\in V(p)$ since then both $\ppred_{p}(t), \ppred_{\neg p}(t) \in \ch$, meaning rule $\nadrule{\ppred_{\psi}(x) \land \ppred_{\negnnf{\psi}}(x)}{\predcontr}$ would be applied at some step of the chase, ensuring that $\predcontr \in \ch$. 

We now show by induction on the complexity of $\psi$ that if $\ppred_{\psi}(t) \in \ch$, then $\pdmodel, t \mermodels \psi$. We argue the case where $\psi = \dia \chi$ as the remaining cases are similar. % and let $\assign$ be the identity function on $W$. 
 If $\ppred_{\dia \chi}(t) \in \ch$, then at some step of the chase, $\narule{\ppred_{\mldia\chi}(x)}{y}{\epred(x,y) \land \rpred(x, y) \land \ppred_{\chi}(y)}$ will be applied, ensuring that $\epred(t, y), \ppred_{\chi}(y) \in \ch$ with $y$ fresh. By definition, $(t,y) \in R$, and by IH, $\pdmodel, y \mforcepd \chi$, implying $\pdmodel, t \mermodels \dia \chi$.

Last, as $\ppred_{\phi}(a) \in \ch$ by definition, it follows that $\pdmodel, a \mforcepd \phi$, showing that $\not\mforcepd \negnnf{\phi}$.
\end{proof}

Last, observe that every step of a fair derivation sequence of $\dbz$ and $\rsp \cup \rsa$ starts with $\dbz$ and successively `grows' a DAG rooted at $a$. As a consequence, the following observations hold, which is used in the next section.

\begin{observation}\label{cor:chase-is-dag}\label{obs:chase-has-minimal} Every element of $\chase{\dbz, \rsp \cup \rsa}$ is a DAG rooted at $a$.
\end{observation}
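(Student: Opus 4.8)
The plan is to prove \cref{cor:chase-is-dag} directly from the structure of fair derivation sequences. The key observation is that every derivation sequence begins with $\dbz = \set{\ppred_{\phi}(a)}$, which trivially is a DAG rooted at the single constant $a$, and that each subsequent step merely extends the current instance by applying a trigger. I would therefore argue by induction on the index $i$ of the instances $\inst_i$ appearing in a fair derivation sequence, maintaining the invariant that each $\inst_i$ is a DAG rooted at $a$, and then take the union at the limit.

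\begin{proof}
Let $\set{\inst_i}_{i \in \mathbb{N}}$ be an arbitrary fair derivation sequence of $\dbz$ and $\rsp \cup \rsa$, and let $\ch = \bigcup_{i \in \mathbb{N}} \inst_i$ be the corresponding element of $\chase{\dbz, \rsp \cup \rsa}$. We show by induction on $i$ that each $\inst_i$ is a DAG rooted at $a$.

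For the base case, $\inst_0 = \dbz = \set{\ppred_{\phi}(a)}$ contains no binary atoms, hence no path, and its only term is the constant $a$; thus $\inst_0$ is vacuously a DAG, and it is rooted at $a$ since there is no other term. For the inductive step, assume $\inst_i$ is a DAG rooted at $a$, and let $\pi_i = \pair{\erule, h}$ be the trigger with $\inst_{i+1} \in \apply{\inst_i, \pi_i}$. By \cref{def:rsp} and the rule form of the QDPs in $\rsa$, every binary atom occurring in a disjunct of $\head{\erule}$ is of the form $\epred(x, z)$ or $\rpred(x, z)$ where $z$ is an existentially quantified variable, instantiated by the application to a fresh term (the only exception being the final binary atom of a QDP head, whose endpoint is the body variable $y$, i.e. an already-existing term reached along the matched path). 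Crucially, in every rule every newly introduced binary atom points \emph{from} a term already present in $\inst_i$ \emph{to} a strictly deeper term (fresh or, for QDPs, the existing endpoint $y$ which lies strictly beyond the newly created intermediate terms). Hence each new binary atom can only extend existing paths forward and cannot create an edge back toward the root; consequently no new cycle is formed and $\inst_{i+1}$ remains acyclic. Moreover, since every fresh term $z$ is connected to the already-rooted term $h(x)$ by a new edge, and $h(x)$ is reachable from $a$ by the induction hypothesis, every term of $\inst_{i+1}$ remains reachable from $a$. Therefore $\inst_{i+1}$ is a DAG rooted at $a$.

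Finally, for the limit, suppose toward a contradiction that $\ch = \bigcup_{i} \inst_i$ contains a path from some term $t$ to itself. As a path has finite length, all of its finitely many atoms already appear in some $\inst_i$, contradicting that $\inst_i$ is acyclic. Hence $\ch$ is acyclic, and since each term of $\ch$ appears in some $\inst_i$ and is thus reachable from $a$, the instance $\ch$ is a DAG rooted at $a$.
\end{proof}

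The main obstacle I anticipate is the QDP case: unlike the $\rsp$-rules, which introduce edges only to fresh terms, a QDP $\modr^k(x,y) \to \modr^n(x,y)$ reuses the existing endpoint $y$ and inserts $n-1$ fresh intermediate terms along a new $R$-path from $h(x)$ to $h(y)$. I must argue carefully that because $h(y)$ already lies strictly deeper than $h(x)$ along the matched length-$k$ path, routing a new length-$n$ path through fresh vertices to the same endpoint introduces no back-edge and hence no cycle. The remaining cases are routine, as the fresh-term condition in the definition of trigger application (\cref{def:rsp}) guarantees forward-only growth.
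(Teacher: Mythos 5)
Your overall route is the same as the paper's: the paper dispatches this observation with a single remark that every fair derivation sequence starts from $\dbz$ and successively ``grows'' a DAG rooted at $a$, and your proposal is the natural induction-plus-limit formalization of exactly that remark. The base case, the rootedness argument, and the passage to the union of the sequence are all fine.

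However, the step justifying acyclicity for QDP applications rests on a false claim. You assert that every newly introduced edge points to a ``strictly deeper'' term, with the existing endpoint $y$ lying ``strictly beyond the newly created intermediate terms,'' and conclude that no new edge points ``back toward the root,'' hence no cycle. In the paper's depth terminology this is incorrect: the matched body already supplies a path of length $k$ from $h(x)$ to $h(y)$, so $\dep{\inst_{i+1}}{h(y)} \leq \dep{\inst_{i+1}}{h(x)} + k$, whereas the fresh intermediate $z_{n-1}$ has depth $\dep{\inst_{i+1}}{h(x)} + n - 1 \geq \dep{\inst_{i+1}}{h(x)} + k$; thus the final new edge $R(z_{n-1}, h(y))$ typically points from a deeper term to a \emph{shallower} one. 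Concretely, with $R$-edges $a \to b \to c \to d$ (created by rule (\ref{rule:box}) applications) and a QDP with $k = 3$, $n = 5$ matched on this path, the fresh path $a \to z_1 \to z_2 \to z_3 \to z_4 \to d$ ends with an edge from $z_4$ (depth $4$) to $d$ (depth $3$). So ``edges never go back toward the root'' is not an invariant you can maintain, and acyclicity does not follow from it as stated. The correct argument is by reachability rather than depth: each fresh term $z_i$ can be entered only through the edge $R(h(x), z_1)$ and left only through $R(z_{n-1}, h(y))$, so any cycle in $\inst_{i+1}$ using a new edge would have to traverse the entire fresh path and hence contain a path from $h(y)$ back to $h(x)$ lying entirely in $\inst_i$; concatenating that with the matched body path from $h(x)$ to $h(y)$ yields a cycle in $\inst_i$, contradicting the induction hypothesis. (Rule (\ref{rule:box}) of \cref{def:rsp} is unproblematic, since its single fresh term has no outgoing edges.) With this repair, your proof is correct and matches the paper's intent.
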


\begin{observation}\label{obs:r-paths} The maximal length of $\rpred$ paths in any element of $\chase{\dbz, \rsp \cup \rsa}$ is bound by $\md{\phi}$.
\end{observation}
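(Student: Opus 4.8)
The plan is to equip each term produced by the chase with a modal-depth ``budget'' $\mu$, read off \emph{solely} from how that term is generated, to show that every label on a term has modal depth at most its budget, and to note that the budget drops by exactly one along every $\rpred$-edge. Fix a fair derivation sequence $\dbz = \ch_0, \ch_1, \dots$ whose union is a given $\ch \in \chase{\dbz, \rsp \cup \rsa}$. Since every term other than $a$ is introduced fresh by a unique rule application, I can define $\mu$ by recursion on the creation order: put $\mu(a) = \md{\phi}$; whenever an application of the rule of form~(iv) creates a fresh $s$ from a term $t$ (adding $\epred(t,s)$, $\rpred(t,s)$ and $\ppred_{\psi}(s)$), put $\mu(s) = \mu(t) - 1$; and whenever an application of a quasi-density rule $k \to \kp$ interpolates an existing $\epred$-path of length $k$ starting at a term $w$, thereby creating its $i$-th fresh interior term $z_i$, put $\mu(z_i) = \mu(w) - i$. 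A trivial induction on creation order then gives $\mu(t) \le \md{\phi}$ for every term $t$.

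The heart of the argument is the following invariant, to be proved by induction on $j$: \textbf{(Edge)} every atom $\epred(u,v) \in \ch_j$ satisfies $\mu(v) \ge \mu(u) - 1$; and \textbf{(Label)} every atom $\ppred_{\psi}(t) \in \ch_j$ satisfies $\md{\psi} \le \mu(t)$. The base case $\ch_0 = \set{\ppred_{\phi}(a)}$ is immediate. I expect the main obstacle to be maintaining \textbf{(Edge)} when a rule $k \to \kp$ fires, and this is the \emph{only} place where the quasi-density hypothesis $k < \kp$ is used. All but one of the $\epred$-edges this rule creates have a freshly minted target $z_i$ with $\mu(z_i) = \mu(w) - i$, so they satisfy $\mu(v) = \mu(u) - 1$ by definition; the exception is the closing edge $\epred(z_{\kp - 1}, w')$ pointing into the already existing endpoint $w'$ of the interpolated path. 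For this edge I would apply \textbf{(Edge)} (inductively) to the $k$ edges of the triggering path to obtain $\mu(w') \ge \mu(w) - k$, and then invoke $k < \kp$ to conclude $\mu(w') \ge \mu(w) - k \ge \mu(w) - (\kp - 1) = \mu(z_{\kp - 1}) - 1$, exactly as \textbf{(Edge)} demands. It is precisely here that an axiom with $k \ge \kp$ would break the estimate, which is why the method is confined to quasi-dense logics.

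Granting \textbf{(Edge)}, the invariant \textbf{(Label)} follows by inspecting the rules of \cref{def:rsp}: rules~(i) and~(ii) replace a label only by sub-formulae of no larger modal depth; the creation rule~(iv) adds $\ppred_{\psi}(s)$ with $\md{\psi} = \md{\mldia\psi} - 1 \le \mu(t) - 1 = \mu(s)$; and the propagation rule~(iii), which adds $\ppred_{\eta}(v)$ from $\ppred_{\mlbox\eta}(u)$ along an edge $\epred(u,v)$, is handled by combining the induction hypothesis $\md{\mlbox\eta} \le \mu(u)$ with \textbf{(Edge)} to get $\md{\eta} = \md{\mlbox\eta} - 1 \le \mu(u) - 1 \le \mu(v)$. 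Finally I would conclude as follows. Because $\rpred$-atoms are produced only by rule~(iv), every $\rpred$-edge $\rpred(u,v)$ accompanies an application creating $v$ from $u$, so $\mu(v) = \mu(u) - 1$; hence along any $\rpred$-path $t_0, \dots, t_m$ one has $\mu(t_m) = \mu(t_0) - m$. Since $t_m$ carries the label attached at its creation by rule~(iv), whose modal depth is non-negative, \textbf{(Label)} yields $\mu(t_m) \ge 0$, and therefore $m = \mu(t_0) - \mu(t_m) \le \mu(t_0) \le \md{\phi}$, which is the claimed bound.
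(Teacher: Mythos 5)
Your proof is correct, and although it rests on the same depth-counting intuition as the paper's argument, it is substantially more than a formalization of it. The paper's proof is a short sketch: $\rpred$-atoms come only from rule (iv), rules (iii) and (iv) drop the modal depth of the tracked formula by one, rules (i), (ii), (v) never increase it, and the $\rsa$-rules introduce only $\epred$-atoms. Notably, that sketch never invokes the hypothesis $k < \kp$, yet the statement is false without it: with a transitivity-style rule $\epred(x,y) \land \epred(y,z) \to \epred(x,z)$ (i.e.\ $2 \to 1$) and $\theform = \dia\dia p \land \Box\dia\dia q$, the shortcut atom $\epred(a,c)$ lets rule (iii) deposit the label $\ppred_{\dia\dia q}$ on the term $c$ sitting at $\rpred$-depth $2$, from which an $\rpred$-path of total length $4 > 3 = \md{\theform}$ grows, even though every premise listed in the paper's proof still holds. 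So the delicate interaction is exactly the one your argument isolates: the closing $\epred$-edge of an $\rsa$-application points into an \emph{existing} term, and rule (iii) can push labels across it into terms that already have incoming $\rpred$-edges. Your budget function $\mu$ with the invariants \textbf{(Edge)} and \textbf{(Label)}, proved by induction along the derivation sequence, handles this cleanly, and your verification of \textbf{(Edge)} for the closing edge is the unique point where $k < \kp$ enters --- precisely what the paper leaves implicit. In short, your route buys a proof that actually uses (and exposes the necessity of) quasi-density, while the paper's sketch buys brevity at the cost of glossing over the only nontrivial case. One cosmetic slip: since $\mu(z_{\kp - 1}) = \mu(w) - (\kp - 1)$, your chain should end $\mu(w') \ge \mu(w) - (\kp - 1) = \mu(z_{\kp - 1}) \ge \mu(z_{\kp - 1}) - 1$; you wrote $\mu(w) - (\kp - 1) = \mu(z_{\kp - 1}) - 1$, an off-by-one in the harmless direction (you establish slightly more than \textbf{(Edge)} requires), so nothing breaks.
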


\begin{proof} Note that creating $\rpred$ paths during the chase is possible only using the (\ref{rule:box}) rules and that their usage decreases the modal depth of the subscript modal formula by one. In addition, when a (\ref{rule:dia}) rule is applied the modal depth of the `propagated' formula $\psi$ is one less than $\dia \psi$. As the (\ref{rule:head-conjunction}) and (\ref{rule:head-disjunction}) rules only decompose conjunctions and disjunctions, the (\ref{rule:contr}) rules only introduce the nullary predicate $\predcontr$, and the $\rsa$ rules only introduce $R$ paths (leaving $E$ atoms unaffected), it follows that $\rpred$ paths must be bounded by $n = \md{\phi}$.
\end{proof}

\section{Decidability via Templates}\label{sec:erules}

The goal of this section is to establish \cref{lem:existential-decidability-problem} (shown below), which, in conjunction with Theorems \ref{thm:sound} and \ref{thm:complete}, implies our main decidability result (Theorem~\ref{thm:main}).  

\begin{lemma}\label{lem:existential-decidability-problem} Given a set $\pdset$ of QDPs and a modal formula $\phi$, it is decidable to check if there exists a $\ch \in \chases$ such that $\predcontr \not\in \ch$.
\end{lemma}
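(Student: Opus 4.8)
The plan is to reduce the problem to checking the existence of a finite combinatorial object---a \emph{template}---that faithfully encodes a $\predcontr$-free element of $\chases$, and then to decide the latter by brute-force enumeration. Concretely, I aim to establish a correspondence of the form: \emph{there is a $\ch \in \chases$ with $\predcontr \notin \ch$ if and only if there is a valid template of size bounded by a computable function of $|\phi|$ and $|\pdset|$}. Since only finitely many templates of bounded size exist, and since validity of a template will turn out to be an $\algcomplexity$-checkable condition, \cref{lem:existential-decidability-problem} follows at once by enumeration.

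First I would isolate the shape of a $\predcontr$-free $\ch$. By \cref{cor:chase-is-dag}, every $\ch \in \chases$ is a DAG rooted at $a$, and by \cref{obs:r-paths} all $\rpred$-paths in it have length at most $\md{\phi}$; together with the fact that rule~(iv) of \cref{def:rsp} creates at most one $\rpred$-successor per $\mldia$-labelled sub-formula, this shows that each $\rpred$-tree occurring in $\ch$ is finite (bounded depth, bounded branching), even though there may be infinitely many of them. The only genuine source of infinitude is therefore the $\epred$-structure generated by the QDP rules of $\rsa$, together with the labels that rule~(iii) propagates along it. The crucial observation is that every term $t$ carries a label set $\labels(t) \subseteq \set{\ppred_\psi \mid \psi \text{ a sub-formula of } \phi}$, of which there are only finitely many, and that by \cref{def:rsp} the chase-behaviour of $t$ (which labels, $\rpred$-successors, and $\epred$-successors it forces) is determined by $\labels(t)$ together with its bounded $\rpred$-neighbourhood. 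This yields finitely many possible \emph{types} of terms.

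Second, I would define a template to be the finite quotient of $\ch$ obtained by folding together terms of equal type, recording for each type its label set, its bounded $\rpred$-subtree of types, and the QDP-induced $\epred$-adjacencies among types; the infinite instance $\ch$ is recovered as the \emph{unfolding} of the template in the sense of \cref{fig:unfold}, whose terms are identified with paths from the root and for which the folding map is a homomorphism. A template is \emph{valid} when it is free of $\predcontr$, is label-consistent (no type contains both $\ppred_\psi$ and $\ppred_{\negnnf{\psi}}$, reflecting rule~(v)), and its unfolding satisfies every rule of $\rsp \cup \rsa$---in particular every QDP, so that the unfolded $\epred$-relation is quasi-dense and leaves no active trigger. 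The correspondence is then proved in two directions: for the forward direction I extract a template from a $\predcontr$-free $\ch$ by a $\core$-like type-quotient, using \cref{obs:chase-preserves-homs} to track the relevant homomorphisms, and verify validity; for the backward direction I show that the unfolding of any valid template is homomorphically equivalent to an element of $\chases$ omitting $\predcontr$, by exhibiting a fair derivation sequence producing it.

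I expect the backward direction to be the main obstacle. Reconstructing a genuine fair chase element from a finite template requires building a quasi-dense $\epred$-relation over the unfolding in which \emph{all} QDP triggers are eventually satisfied, while simultaneously guaranteeing that the box-propagation of rule~(iii) along these freshly created $\epred$-edges never produces a pair $\ppred_\psi, \ppred_{\negnnf{\psi}}$ at a common term (which would introduce $\predcontr$). The delicate point is precisely that the QDP-generated $\epred$-structure is not inert: it is labelled by propagation and may itself spawn further $\rpred$-subtrees and triggers, so the finite template must be rich enough to certify global consistency of this unbounded process. Showing that the bounded type information suffices to guarantee a contradiction-free, QDP-satisfying unfolding is the intricate model-theoretic heart of the argument; once it is in place, bounding the number of templates by an exponential in $|\phi|$ and $|\pdset|$ and checking each for validity yields the claimed $\algcomplexity$ decision procedure.
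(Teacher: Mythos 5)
Your high-level plan (finite templates, a two-way correspondence with $\predcontr$-free chase elements, brute-force enumeration) coincides with the paper's, but there is a genuine gap exactly where you yourself locate ``the intricate model-theoretic heart'': you never supply it, and the surrounding definitions do not make it suppliable. First, your notion of a \emph{valid} template is not finitely checkable as stated: you require that the (generally infinite) unfolding of the template satisfies every rule of $\rsp \cup \rsa$, in particular that no QDP trigger is ever left active in the unfolding. This merely restates the original difficulty on a different infinite object; a decision procedure needs conditions verifiable on the finite template itself. The paper's \cref{def:map} achieves this: besides $\rsp$-modelhood of the finite multi-tree and the root label, it demands (a) that $\map$ be a fixpoint of $\unravel_N = \core\circ\trim_N\circ\unfold$, and (b) \emph{compliance} (\cref{def:compliance}): for every term at depth at most $2n$ and every QDP $\kkp$, each descendant at distance $k$ admits a descendant at distance $k_+$ whose subtree \emph{properly embeds} into the former's subtree. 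Compliance is precisely the finite certificate that QDP applications can always be absorbed; nothing in your type-based validity condition plays this role.

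Second, the backward direction (the paper's \cref{lem:map-to-chase}) is asserted rather than proved: you write that ``once it is in place'' the result follows, but that direction \emph{is} the content of the lemma. The paper proves it by defining when an instance \emph{follows} a template (\cref{def:following-map}: a homomorphism from $\unravel_n(\inst)$ to $\map$ under which all unfolded copies of a term receive identically labelled images) and establishing a simulation invariant (\cref{lem:main-technical}): every active trigger on an instance following $\map$ has some disjunct whose application again follows $\map$ --- for the disjunctive rule via label-coherence, and for a QDP rule by re-routing the freshly created path onto a compliance witness, where \emph{properness} of the embedding guarantees the relevant subtrees agree after $\core\circ\trim$, so the homomorphism extends. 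Iterating this builds a fair derivation sequence whose limit maps into $\map$ and hence omits $\predcontr$. Relatedly, your forward direction is also shaky: you quotient $\ch$ by ``types'' (label set plus bounded $\rpred$-neighbourhood) and claim the chase behaviour of a term is determined by this local data, but it is not --- rule (iii) keeps relabelling the QDP-generated $\epred$-paths globally (the very phenomenon you flag), so terms of equal type can have non-isomorphic continuations, and the unfolding of the quotient need not satisfy the rules nor recover $\ch$. The paper avoids quotients entirely, extracting the template as $\unravel_N(\ch)$ (\cref{obs:chase-to-map}), i.e.\ by unfolding, trimming, and coring rather than folding.
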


To demonstrate the above lemma, we first show that every element of $\chases$ can be unfolded into a multi-tree (see \cref{def:unfold}), trimmed to a certain depth (see \cref{def:trim}), and simplified via the $\core$ operation (see \cref{sec:prelims}), yielding a finite instance called a template (\cref{def:map}, and \cref{obs:chase-to-map}). Second, we show that the existence of a template without $\predcontr$ is sufficient to conclude the existence of a $\ch \in \chases$ such that $\predcontr \not\in \ch$ (\cref{lem:map-to-chase}). In other words, we establish a correspondence between templates (of which there can be only finitely many) and members of $\chases$ (\cref{cor:map-equiv-chase}), culminating in a proof of the above lemma. As the size of templates depends on the modal depth of $\phi$ and elements of $\axioms$, we fix a few parameters that will be instrumental to conclude our proofs. In particular, we let $n = \md{\theform}$, and $N := 4n + K$ such that $K := \max\{k_{+} \ | \ k \rightarrow k_{+} \in \axioms\}$ for the remainder of the section.

%\noindent
\subsection{Proof of Lemma~\ref{lem:existential-decidability-problem}}

Let us first define three crucial operations -- $\unfold$, $\trim$, and $\unravel$ -- which will be used in our proofs. To increase comprehensibility, we provide examples of various operations we consider, which are included in~\cref{fig:unfold,fig:trim,fig:embedding-compliance}.

\begin{definition}[$\unfold$]\label{def:unfold}
Given an instance $\inst$, one of its terms $t$ and a term $s \in \ourterms$, we inductively define $\unfold(\inst, t, s)$ as:
\begin{equation*}
     \labels(t)[t \mapsto s \sbullet t] \;\cup \bigcup_{ \spred(t, r) \in \inst} \spred(s \sbullet t, s \sbullet t\sbullet r) \;\cup \!\!\!\!\!\!\!\!\! \bigcup_{r \text{ is a successor of } t} \!\!\!\!\!\!\!\!\! \unfold(\inst, r, s \sbullet t) 
\end{equation*}
for $\spred \in \set{\epred, \rpred}$ and where the operation $\jnst[t \mapsto s]$ replaces each occurrence of $t$ in an instance $\jnst$ with $s$. We will write $\unfold(\inst)$ instead of $\unfold(\inst, r, \varepsilon)$ if $\inst$ is a DAG rooted at $r$.
\end{definition}
\noindent
An example of the $\unfold$ operation can be found in \cref{fig:unfold}.

\begin{observation}\label{obs:unfold-preserves-hom}
    Given two rooted DAGs $\inst$ and $\jnst$, if there exists a homomorphism $h$ from $\inst$ to $\jnst$ then there exists a homomorphism $h'$ from $\unfold(\inst)$ to $\unfold(\jnst)$ 
\end{observation}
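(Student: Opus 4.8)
The plan is to exploit the structural property of $\unfold$ displayed in \cref{fig:unfold}: the terms of $\unfold(\inst)$ are in bijection with the paths of $\inst$ that start at its root, a path $\ppath = (t_0, t_1, \ldots, t_k)$ (with $t_0$ the root of $\inst$) being encoded by the term $t_0 \sbullet t_1 \sbullet \cdots \sbullet t_k$, and every term of $\unfold(\inst)$ arising in this way. Since a homomorphism $h$ carries each edge $\spred(t_i, t_{i+1})$ of $\inst$ to an edge $\spred(h(t_i), h(t_{i+1}))$ of $\jnst$, it carries the path $\ppath$ to a walk $(h(t_0), \ldots, h(t_k))$ in $\jnst$, which is again a path since $\jnst$ is acyclic. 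I would therefore define $h'$ by sending the term of $\ppath$ to the term of a suitable root-path extending the image of $\ppath$, and then verify that $h'$ preserves labels and edges.

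The one subtlety, and the step I expect to be the \emph{main obstacle}, is that a homomorphism between rooted DAGs need not map the root of $\inst$ to the root of $\jnst$, whereas every term of $\unfold(\jnst)$ encodes a path that begins at the root $r_0$ of $\jnst$. Thus the image walk begins at $h(t_0)$, which may be an interior term of $\jnst$, so the word $h(t_0) \sbullet \cdots \sbullet h(t_k)$ need not be a term of $\unfold(\jnst)$. To repair this I would fix, once and for all, a path $\pi = (r_0 = u_0, \ldots, u_m = h(t_0))$ from $r_0$ to $h(t_0)$ in $\jnst$ — such a path exists precisely because $\jnst$ is rooted — and set $h'(t_0 \sbullet \cdots \sbullet t_k) := u_0 \sbullet \cdots \sbullet u_m \sbullet h(t_1) \sbullet \cdots \sbullet h(t_k)$, the term encoding the concatenation of $\pi$ with the image of $\ppath$. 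Acyclicity of $\jnst$ again guarantees this concatenation is a genuine path from $r_0$, hence a legitimate term of $\unfold(\jnst)$. (In every application in this paper both DAGs are rooted at the constant $a$ and $h$ fixes constants, so $h(t_0) = a = r_0$, the prepended path $\pi$ is trivial, and $h'$ reduces to applying $h$ letter-by-letter; but the general statement requires the prepended path.)

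It then remains to check that $h'$ is a homomorphism, which I expect to follow routinely from the defining recursion of $\unfold$. For labels: the label atoms carried by the term of $\ppath$ are exactly $\labels_{\inst}(t_k)$, while $h'$ sends this term to one carrying the labels $\labels_{\jnst}(h(t_k))$; since $\labels_{\inst}(t_k) \subseteq \labels_{\jnst}(h(t_k))$ because $h$ is a homomorphism, every label atom is preserved. For edges: an edge of $\unfold(\inst)$ has the form $\spred(w, w \sbullet t_{k+1})$, where $w$ encodes a root-path ending at $t_k$ and $\spred(t_k, t_{k+1}) \in \inst$; by the definition of $h'$ one has $h'(w \sbullet t_{k+1}) = h'(w) \sbullet h(t_{k+1})$, and since $\spred(h(t_k), h(t_{k+1})) \in \jnst$, the unfold recursion places the edge $\spred(h'(w), h'(w) \sbullet h(t_{k+1}))$ into $\unfold(\jnst)$, which is exactly $\spred(h'(w), h'(w \sbullet t_{k+1}))$. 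This establishes that $h'$ is the desired homomorphism.
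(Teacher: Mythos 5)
Your core construction coincides with the paper's own proof: map each term letter-by-letter, $h'(t_0 \sbullet \cdots \sbullet t_k) := h(t_0) \sbullet \cdots \sbullet h(t_k)$, and verify label atoms and edge atoms exactly as you do. You also correctly spotted a subtlety that the paper's proof glosses over: for $h'(w)$ to be a term of $\unfold(\jnst)$ one needs the image walk to begin at the root of $\jnst$, i.e.\ the paper silently assumes that $h$ maps root to root (an assumption that does hold in every application of the observation: either both DAGs are rooted at the constant $a$, or, as with the sub-trees $\tree_{s'}$ and $\tree_{u'}$ in the proof of \cref{lem:main-technical}, the homomorphism is root-preserving by construction).

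However, your repair for the non-root-preserving case contains a genuine gap, and in fact that case cannot be repaired. Your prepended-path map $h'(t_0 \sbullet \cdots \sbullet t_k) := u_0 \sbullet \cdots \sbullet u_m \sbullet h(t_1) \sbullet \cdots \sbullet h(t_k)$ can violate clause (2) of the paper's definition of homomorphism, namely $h'(c) = c$ for every constant $c$: if the root $t_0$ of $\inst$ is a constant, then the root term of $\unfold(\inst)$ is the one-letter word $t_0$, i.e.\ the constant itself, and whenever $\pi$ is nontrivial your $h'$ sends it to the longer word $u_0 \sbullet \cdots \sbullet u_m \neq t_0$. Your verification checks labels and edges but never constant preservation, which is precisely where the argument breaks. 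Worse, in this situation the statement itself is false, so no choice of $\pi$ can help: take $\inst = \set{\rpred(a,x)}$, rooted at the constant $a$, and $\jnst = \set{\rpred(y,a),\, \rpred(a,x)}$, rooted at $y$. The identity map is a homomorphism from $\inst$ to $\jnst$, yet $\unfold(\inst) = \set{\rpred(a,\, a \sbullet x)}$ admits no homomorphism to $\unfold(\jnst) = \set{\rpred(y,\, y \sbullet a),\, \rpred(y \sbullet a,\, y \sbullet a \sbullet x)}$, since any such map must fix the constant $a$ while $a$ is not a term of $\unfold(\jnst)$. The correct resolution is therefore not to prepend a path but to read the observation with the implicit hypothesis that $h$ maps the root of $\inst$ to the root of $\jnst$ (true in all of its uses in the paper) --- exactly the hypothesis under which your construction degenerates to the paper's letter-by-letter map, which is correct as it stands.
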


\begin{proof} Let $h'(w) := h(t_1) \sbullet h(t_2) \cdots h(t_n)$ for each term $w = t_1 \sbullet t_2 \cdots t_n$ in $\unfold(\inst)$. We know that $h'(w)$ exists since in $\inst$ there exists a path $P(t_{1},t_{n})$ such that $\word(P(t_{1},t_{n})) \sbullet t_n = w$ (recall that $\word$ omits the last element of its path).
%this is true because we assumed that $w$ is a term of $\unfold(\inst)$. 
 The path $P(t_{1},t_{n})$ has to be mapped to the path $h(P(t_{1},t_{n}))$ as $h$ is a homomorphism from $\inst$ to $\jnst$. Thus, $h(P(t_{1},t_{n}))$ witnesses the existence of the term $h'(w)$ in $\jnst$. Next, let us show that $h'$ is indeed a homomorphism. First, observe that if $\apred(w\sbullet t) \in \unfold(\inst)$, then $\apred(t) \in \inst$, meaning $\apred(h(t)) \in \jnst$, and so, $\apred(h'(w\sbullet t)) \in \unfold(\jnst)$. Second, let $\spred(w, w\sbullet t)$ be an atom of $\unfold(\inst)$. Then, $\spred(s, t)$ is an atom of $\inst$ where $w = w' \sbullet s$. Since $h(\spred(s, t))$ is an atom of $\jnst$, $\spred(h'(w), h'(w \sbullet t)) = \spred(h'(w') \sbullet h(s),\; h'(w') \sbullet h(s) \sbullet h(t))$ will be an atom of $\unfold(\jnst)$.
\end{proof}

\begin{definition}[$\trim$]\label{def:trim}
Given a multi-tree $\tree$ over $\set{\epred, \rpred}$ and a natural number $i$, we define $\trim_i(\tree)$ as the multi-tree $\tree$ with all the elements past depth $i$ removed, except for the elements reachable with $E$-paths from elements at depths no larger than $i$.
\end{definition}

An example of how $\trim$ works can be found in \cref{fig:trim}. The following observation is straightforward to prove using the above definition.

\begin{observation}\label{obs:trim-preserves-hom} Given two multi-trees $\tree$ and $\tree'$, if there exists a homomorphism $h$ from $\tree$ to $\tree'$ then there exists a homomorphism $h'$ from $\trim_i(\tree)$ to $\trim_j(\tree')$ for any $i,j \in \mathbb{N}$ such that $i \leq j$. 
\end{observation}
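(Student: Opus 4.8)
The plan is to take $h'$ to be simply the restriction of $h$ to the terms occurring in $\trim_i(\tree)$, and then to verify that this restriction is a well-defined homomorphism with codomain $\trim_j(\tree')$. The whole argument reduces to two elementary facts about homomorphisms between our constant-rooted multi-trees: that they do not increase the depth of a term, and that they send $E$-paths (i.e. $\rpred$-paths) to $E$-paths. Both are immediate once we observe that the common root is the constant $a$, so that $h(a) = a$.

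First I would establish the depth bound $\dep{\tree'}{h(t)} \le \dep{\tree}{t}$ for every term $t$ of $\tree$. Any path $\ppath(a,t)$ realising the depth of $t$ is sent by $h$ to a path $h(\ppath(a,t))$ from $a$ to $h(t)$ of the same length, so the minimal path length from the root to $h(t)$ can only be smaller. Restricting the very same reasoning to atoms over $\rpred$ shows that an $E$-path from $s$ to $t$ in $\tree$ is mapped to an $E$-path from $h(s)$ to $h(t)$ in $\tree'$.

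With these two facts in hand, I would show that $h'$ lands in $\trim_j(\tree')$ at the level of terms. Let $t$ be a term retained by $\trim_i(\tree)$. If $\dep{\tree}{t} \le i$, then the depth bound gives $\dep{\tree'}{h(t)} \le i \le j$, so $h(t)$ survives the trim at level $j$. Otherwise $t$ is reached by an $E$-path from some $s$ with $\dep{\tree}{s} \le i$; then $h(s)$ has depth $\le i \le j$ and the $E$-path from $s$ to $t$ maps to an $E$-path from $h(s)$ to $h(t)$, so $h(t)$ is retained for the second reason. The homomorphism conditions then transfer for free: every atom of $\trim_i(\tree)$ is an atom of $\tree$ all of whose terms survive the trim, its $h$-image is an atom of $\tree'$ whose terms we have just shown survive $\trim_j$, and $\trim$ discards an atom only when it discards one of its terms; hence each image atom lies in $\trim_j(\tree')$.

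I expect the only genuine subtlety, and thus the \emph{main obstacle}, to be the bookkeeping ensuring that image atoms are not accidentally trimmed in the target: one must confirm that \emph{both} endpoints of each retained binary atom of $\trim_i(\tree)$ map to terms retained by $\trim_j(\tree')$, which is precisely where the hypothesis $i \le j$ is used, and why the depth-non-increase and $E$-path-preservation facts must be secured before the term-level containment. Everything else is routine from the definition of $\trim$.
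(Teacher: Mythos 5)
Your proof is correct and takes precisely the route the paper intends: the paper in fact gives no proof of this observation, dismissing it as ``straightforward to prove using the above definition,'' and your argument---restrict $h$ to the terms of $\trim_i(\tree)$, establish depth-non-increase and $\rpred$-path preservation, then check term- and atom-level retention using $i \leq j$---is exactly that straightforward argument made explicit. The one substantive point is the assumption you correctly surface at the outset: the claim needs $h$ to send root to root (e.g.\ both multi-trees rooted at the constant $a$, so $h(a)=a$), without which the observation is false as literally stated for arbitrary multi-trees; this hypothesis is left implicit in the paper but holds in every application, since the trees there are unfoldings of chase elements rooted at the constant $a$.
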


From \cref{cor:chase-is-dag} and \cref{obs:r-paths}, we get the following corollary:

\begin{corollary}\label{cor:trim-bound-r-paths}
The depth of $\trim_i \circ \unfold(\ch)$ for any element $\ch$ of $\chase{\dbz, \rsp \cup \rsa}$ is bounded by $i + n$. Thus, $\core \circ \trim_i \circ \unfold(\ch)$ is well-defined  for any $i \in \mathbb{N}$.
\end{corollary}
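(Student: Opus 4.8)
The plan is to control the depth of $\trim_i \circ \unfold(\ch)$ by measuring how far the $\trim_i$ operation is able to reach beyond depth $i$, and then to observe that a finite depth bound is precisely what makes the core well-defined. By \cref{cor:chase-is-dag}, every $\ch \in \chase{\dbz, \rsp \cup \rsa}$ is a DAG rooted at $a$, so $\unfold(\ch)$ is defined and is a multi-tree, to which $\trim_i$ may be applied.

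The crucial structural observation I would isolate first is that, in every element of the chase, each $\epred$-edge is accompanied by a parallel $\rpred$-edge on the same ordered pair of terms. This is immediate from inspecting $\rsp \cup \rsa$: the only rule whose head contains an $\epred$-atom is rule (\ref{rule:box}), which introduces $\epred(x,y)$ and $\rpred(x,y)$ simultaneously, while the remaining rules of $\rsp$ leave $\epred$-atoms untouched and the $\rsa$-rules introduce only $\rpred$-atoms. Consequently every $\epred$-path $\epred(s_0,s_1), \dots, \epred(s_{m-1},s_m)$ is also an $\rpred$-path of the same length. Since $\unfold$ merely renames terms along unique root-paths, sending each edge of $\ch$ to a corresponding edge, this $\epred$/$\rpred$ coupling --- and hence the correspondence between $\epred$-paths and $\rpred$-paths of equal length --- is inherited by $\unfold(\ch)$.

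Next I would transfer the path-length bound. By \cref{obs:r-paths} every $\rpred$-path in $\ch$ has length at most $n$, and because each edge of $\unfold(\ch)$ corresponds to an edge of the DAG $\ch$, the same bound holds for the $\rpred$-paths of $\unfold(\ch)$, and therefore for its $\epred$-paths. Now recall that $\trim_i(\unfold(\ch))$ retains exactly the terms of depth at most $i$ together with the terms reachable from them along $\epred$-paths. Every edge of the multi-tree $\unfold(\ch)$ runs from a term to a strictly deeper child, so depth is additive along paths; hence a term reached by an $\epred$-path of length $\ell \le n$ from a term of depth $d \le i$ sits at depth $d + \ell \le i + n$. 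As the retained terms of depth at most $i$ trivially satisfy the same bound, the depth of $\trim_i \circ \unfold(\ch)$ is at most $i + n$.

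The concluding clause is then immediate: $\trim_i \circ \unfold(\ch)$ is a multi-tree of finite depth (at most $i + n$), and since cores of (multi-)trees of finite depth exist, $\core \circ \trim_i \circ \unfold(\ch)$ is well-defined. The single delicate point in the argument is the coupling between $\epred$- and $\rpred$-edges; without it an $\epred$-path could a priori be arbitrarily long, and $\trim_i$ would fail to produce a bounded-depth object. Everything else is a routine transfer of the bound of \cref{obs:r-paths} through the edge-preserving $\unfold$ map, together with the additivity of depth in a tree.
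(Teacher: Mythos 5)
Your overall skeleton is correct and is essentially the paper's own (the paper derives this corollary directly from \cref{cor:chase-is-dag} and \cref{obs:r-paths} with no further machinery): $\ch$ is a DAG rooted at $a$, so $\unfold(\ch)$ is a multi-tree; the paths that $\trim_i$ may follow past depth $i$ have length at most $n$; every edge of the multi-tree increases depth by one; hence the depth bound $i+n$; and a multi-tree of finite depth has a well-defined core. The problem is the middle of your argument, which rests on a mix-up between the two binary predicates. The coupling you call ``the crucial structural observation'' is backwards: rule~(\ref{rule:box}) is the only rule whose head contains an $\rpred$-atom, and it introduces it together with a parallel $\epred$-atom, so every $\rpred$-edge is accompanied by an $\epred$-edge---but \emph{not} conversely. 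The heads of the $\rsa$-rules consist solely of $\epred$-atoms, so rule~(\ref{rule:box}) is not the only rule whose head contains an $\epred$-atom, and the $\rsa$-rules do not introduce $\rpred$-atoms, contrary to what you write. In fact, $\epred$-paths in chase elements are genuinely unbounded: with the QDP $1 \to 2$, fairness forces arbitrarily long $\epred$-paths between the endpoints of any existing $\epred$-edge. That unboundedness is the very reason the paper introduces $\rpred$ at all, namely to tag the $\epred$-edges created by rule~(\ref{rule:box}). So your intermediate conclusion ``and therefore for its $\epred$-paths'' is not just unproven but false.

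Fortunately, the corollary does not need it, because you also misquote \cref{def:trim}: $\trim_i$ retains the elements reachable by $\rpred$-paths (the definition says $E$-paths) from elements at depth at most $i$, not by $\epred$-paths. Since \cref{obs:r-paths} bounds exactly the $\rpred$-paths---the same paths $\trim$ follows---no coupling between the two predicates is needed at all: your transfer of the bound through the edge-preserving $\unfold$ map and your depth-additivity step already complete the proof once the predicates are sorted out. As written, though, your ``single delicate point'' is a false claim, and had the argument genuinely required a bound on $\epred$-paths, it would be unsalvageable. (The slip very likely stems from the paper's macro names, which are swapped relative to the letters they typeset; check which predicate \cref{obs:r-paths} and \cref{def:trim} actually speak about.)
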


\begin{definition}[$\unravel$] We define $\unravel_i := \core \circ \trim_i \circ \unfold$. Note that $\unravel_i$ is idempotent for any $i \in \mathbb{N}$, that is, for any instance $\inst$, $(\unravel_i \circ \unravel_i)(\inst)$ is isomorphic to $\unravel_i(\inst)$.
\end{definition}

\begin{remark} We assume that the $\unfold$, $\trim$, and $\unravel$ operations preserve the nullary atom $\predcontr$ if it appears in the input instance.
\end{remark}

\begin{figure*}
    \centering
\includegraphics[width=0.7\linewidth]{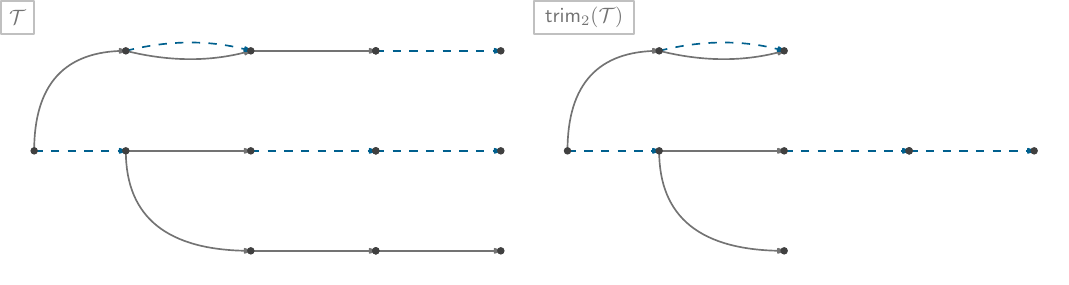}
    \caption{Example trimming of a multi-tree to the depth of two. The dashed blue edges are $\rpred$, while solid gray edges are $\epred$. Note how $\rpred$ paths are preserved if they start below depth of three.}
    \label{fig:trim}
\end{figure*}

\begin{observation}\label{obs:unravel-preserves-rsp}
For any instance $\inst$ and $i \in \mathbb{N}$, if $\inst \models \rsp$, then $\unravel_i(\inst) \models \rsp$.
\end{observation}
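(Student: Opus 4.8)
The plan is to show that each of the three constituent operations---$\unfold$, $\trim_i$, and $\core$---individually preserves satisfaction of $\rsp$, and then to conclude by composition, since $\unravel_i = \core \circ \trim_i \circ \unfold$. Throughout I would use the characterization that $\jnst \models \rsp$ holds exactly when $\jnst$ contains no active $\rsp$-trigger; equivalently, every body-match of a rule in $\jnst$ extends to a match of (some disjunct of) its head. The guiding observation is that every rule of $\rsp$ refers either to a single term (rules (\ref{rule:head-conjunction}), (\ref{rule:head-disjunction}), (\ref{rule:contr})) or to a term together with one of its successors (rules (\ref{rule:dia}), (\ref{rule:box})), so each rule is essentially \emph{local} to the out-neighbourhood of a term.

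For $\unfold$, I would first record the key local correspondence: a term $w = s \sbullet t$ of $\unfold(\inst)$ carries exactly the unary labels that $t$ carries in $\inst$, and its outgoing $\epred$/$\rpred$-edges are in bijection with those of $t$ (each $\spred(t,r) \in \inst$ becoming $\spred(w, w \sbullet r)$). Hence the labelled out-neighbourhood of $w$ in $\unfold(\inst)$ is isomorphic to that of $t$ in $\inst$. Since every rule of $\rsp$ is local, a body-match at $w$ descends to a body-match at $t$, whose head is witnessed in $\inst$ (as $\inst \models \rsp$) and therefore, via the same correspondence, witnessed at $w$; the case of rule (\ref{rule:contr}) additionally invokes the Remark guaranteeing that $\predcontr$ is preserved. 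This gives $\unfold(\inst) \models \rsp$.

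For $\trim_i$, I would note that trimming only deletes terms (with their incident edges) while leaving the labels and all edges among surviving terms untouched. The local rules (\ref{rule:head-conjunction}), (\ref{rule:head-disjunction}), (\ref{rule:contr}) and the box-rule (\ref{rule:dia}) are then immediate: a surviving body-match already had its head witnessed on a surviving term, and for (\ref{rule:dia}) a surviving match $\epred(x,y)$ forces $y$ to survive, so $\ppred_{\psi}(y)$ survives with it. The delicate case is the diamond-rule (\ref{rule:box}): if $\ppred_{\mldia\psi}(x)$ survives, I must argue its witness $y$---joined to $x$ by $\rpred(x,y)$, since rule (\ref{rule:box}) creates the marker edge---also survives. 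This follows from the definition of $\trim_i$: any surviving term $x$ is either at depth $\le i$ or $\rpred$-reachable (i.e.\ $E$-reachable) from a depth-$\le i$ term, and since $y$ is an $\rpred$-successor of $x$, in either case $y$ is $\rpred$-reachable from a depth-$\le i$ term and hence survives, carrying along the edges $\epred(x,y), \rpred(x,y)$ and the label $\ppred_{\psi}(y)$.

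The main obstacle is the $\core$ step, since passing to the core deletes atoms and could a priori destroy the consequent of a rule even when its antecedent is retained. The resolution is to work with a retraction $r \colon \inst \to \core(\inst)$ that fixes $\core(\inst)$ pointwise: such an $r$ exists because any endomorphism of a core is an automorphism, so one composes a homomorphism $h \colon \inst \to \core(\inst)$ with the inverse of its (automorphic) restriction to the core, obtaining $r$ with $r(u) = u$ for every term $u$ of $\core(\inst)$. Given a body-match of a rule lying entirely in $\core(\inst) \subseteq \inst$, I use $\inst \models \rsp$ to produce the head atoms in $\inst$ and then apply $r$: because $r$ fixes the body terms, the images of the head atoms land in $\core(\inst)$ on exactly the required terms (the existential witness $v$ of rule (\ref{rule:box}) being sent to $r(v) \in \core(\inst)$, and $\predcontr$ preserved for rule (\ref{rule:contr})). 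Thus $\core(\trim_i(\unfold(\inst))) \models \rsp$, and chaining the three preservation results yields $\unravel_i(\inst) \models \rsp$.
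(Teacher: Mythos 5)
Correct, and essentially the paper's own approach: the paper likewise decomposes $\unravel_i = \core \circ \trim_i \circ \unfold$ and argues operation by operation (and rule by rule) that each step preserves satisfaction of $\rsp$, singling out the existential diamond rule (\ref{rule:box}) as the only delicate case. Your write-up merely supplies the details the paper labels trivial, namely the $\rpred$-reachability argument showing $\trim_i$ keeps the existential witnesses and the retraction argument showing $\core$ preserves rule satisfaction.
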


\begin{proof} The observation holds for rules of the form (\ref{rule:head-conjunction}), (\ref{rule:head-disjunction}), (\ref{rule:box}), and (\ref{rule:contr}) as every operation composed into $\unravel$ preserves the satisfaction of such rules. For a rule $\erule$ of the form (\ref{rule:dia}), $\unfold$ trivially preserves the satisfaction of $\erule$, $\trim$ never removes witnesses necessary for the satisfaction of the rule, and $\core$ also trivially preserves the satisfaction of the rule. Hence, such a rule will be satisfied as well.
%Moreover, $\unfold$ and $\core$ trivially preserve the satisfaction of the existential rules of $\rsp$. Finally, note that $\trim$ never removes witnesses necessary for the satisfaction of existential rules of $\rsp$. This is achieved by preserving elements reachable by $\rpred$-paths from non-removed terms.
\end{proof}

\begin{observation}\label{obs:unravel-preserves-hom} Given two rooted DAGs $\inst$ and $\jnst$, if there exists a homomorphism $h$ from $\inst$ to $\jnst$, then there exists a homomorphism $h'$ from $\unravel_i(\inst)$ to $\unravel_j(\jnst)$ for $i,j \in \mathbb{N}$ such that $i \leq j$.
\end{observation}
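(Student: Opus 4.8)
The plan is to prove the claim purely by composition, unwinding the definition $\unravel_i := \core \circ \trim_i \circ \unfold$ and chaining together the homomorphisms already furnished by \cref{obs:unfold-preserves-hom}, \cref{obs:trim-preserves-hom}, and the defining property of $\core$. First, since $\inst$ and $\jnst$ are rooted DAGs and $h : \inst \to \jnst$ is a homomorphism, \cref{obs:unfold-preserves-hom} immediately yields a homomorphism $h_1 : \unfold(\inst) \to \unfold(\jnst)$. The unfolding of a rooted DAG is a multi-tree (each term being uniquely identified by the path-word addressing it from the root), so \cref{obs:trim-preserves-hom} applies: using the hypothesis $i \leq j$, it supplies a homomorphism $h_2 : \trim_i(\unfold(\inst)) \to \trim_j(\unfold(\jnst))$.

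It then remains only to pass through the $\core$ operation, and here one must be careful about the direction of the two core-homomorphisms. Recall that for any finite instance (or tree of finite depth) $\structure{K}$ one has both $\core(\structure{K}) \subseteq \structure{K}$ and a homomorphism from $\structure{K}$ onto $\core(\structure{K})$. Consequently, on the source side the inclusion $\iota : \core(\trim_i(\unfold(\inst))) \hookrightarrow \trim_i(\unfold(\inst))$ is a homomorphism, while on the target side there is a retraction $r : \trim_j(\unfold(\jnst)) \to \core(\trim_j(\unfold(\jnst)))$. Both cores are well-defined because the trimmed unfoldings are trees of finite depth (cf.\ \cref{cor:trim-bound-r-paths}, which bounds this depth for the chase-derived instances to which the observation is applied). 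Setting $h' := r \circ h_2 \circ \iota$ gives a homomorphism
$$
h' : \unravel_i(\inst) = \core(\trim_i(\unfold(\inst))) \longrightarrow \core(\trim_j(\unfold(\jnst))) = \unravel_j(\jnst),
$$
which is exactly what is claimed.

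The argument is essentially a mechanical diagram chase, so I expect no genuine difficulty beyond bookkeeping. The one point deserving care — and hence the closest thing to an obstacle — is precisely the asymmetry in how the core interacts with homomorphisms on the two ends: the domain must be embedded into its host instance via the inclusion $\iota$, whereas the codomain must be collapsed onto its core via the retraction $r$, so that the composite $r \circ h_2 \circ \iota$ is correctly typed as a map between the two cores. The remaining hypotheses (that $\inst,\jnst$ are rooted DAGs, and that $i \leq j$) are exactly what is needed to license the two preceding observations, and nothing further is required.
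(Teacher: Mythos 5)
Your proof is correct and follows essentially the same route as the paper's: both chain \cref{obs:unfold-preserves-hom} and \cref{obs:trim-preserves-hom} to get a homomorphism between the trimmed unfoldings, then pre-compose with the inclusion of the source core (a sub-instance) and post-compose with the retraction of the target onto its core. Your explicit attention to the direction of the two core maps matches exactly what the paper does with its maps $f$ and $g$.
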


\begin{proof} From \cref{obs:unfold-preserves-hom}, and \cref{obs:trim-preserves-hom} we have that there exists homomorphism $h''$ from $\trim_i \circ \unfold(\inst)$ to $\trim_j \circ \unfold(\jnst)$. Note that $\core \circ \trim_i \circ \unfold(\inst)$ is a sub-instance of $\trim_i \circ \unfold(\inst)$. Thus, there exists a homomorphism $f$ from the former to $\trim_j \circ \unfold(\jnst)$ that is a restriction of $h''$. Moreover, by definition, there exists a homomorphism $g$ from $\trim_j \circ \unfold(\jnst)$ to $\core \circ \trim_j \circ \unfold(\jnst)$. Therefore, $g \circ f$ is the required homomorphism.
\end{proof}

%The following two definitions describe concepts fundamental to our proof that culminate in the definition of a template (\cref{def:map}). In short, when encoding elements of $\chase{\dbz, \rsp \cup \rsa}$ into finite objects, certain sacrifices in terms of modelhood have to be made.\footnote{Unless the finite model property holds for $\rsp \cup \rsa$; however, we do not know whether this is true.} Having \cref{obs:unravel-preserves-rsp}, we can keep the $\rsp$-modelhood requirement for templates; yet, the process of unraveling breaks $\rsa$-modelhood. The definition of compliance (\cref{def:compliance}) provides a relaxed alternative.

\begin{definition}[$\embed$]\label{def:embed} Given a multi-tree $\tree$ and two of its terms $t'$ and $t_+$ such that there is a partial isomorphism $f$ from $\tree_{t_+}$ to $\tree_{t'}$ we write $\embed(\tree, t', t_+, f)$ to indicate an instance $\bar{f}(\tree)$ where $\bar{f}$ is an extension of $f$ over the set of all terms that is the identity function outside the domain of $f$. If such $f$ exists we say that $t_+$ is \emph{embeddable} in $t'$. We say that embedding is \emph{proper} \iffi $\unravel_{\ell} \circ \embed(\tree, t', t_+, f)$ is isomorphic to $\tree$, where $\ell$ is the maximal length of a path in $\tree$.
\end{definition}

An example of how $\embed$ works can be found in \cref{fig:embedding-compliance}.

\begin{definition}[compliance]\label{def:compliance} Given a QDP $\kkp$, a multi-tree $\tree$ and one of its terms $t$, we say that $t$ is $(\kkp)$-\emph{compliant} \iffi for every descendant term $t'$ of $t$ at a distance $k$ there exists a descendant term $t_+$ of $t$ at a distance $\kp$ such that $t_+$ is properly embeddable in $t'$ through the partial isomorphism $f$. We call every such $\pair{t', t_+, f}$ a \emph{compliance witness} for $t$. We extend this notion to sets of QDPs in the natural way.
\end{definition}

Observe that the right side of \cref{fig:embedding-compliance} shows that $s'$ and $s_+$, along with the partial isomorphism depicted as the purple arrows, serves as a $(2 \rightarrow 3)$-compliance witness for $s$ and $s'$.

\begin{observation}\label{obs:unravel-is-compliant}
Given a rooted DAG $\inst$ that is a model of $\rsa$ we have that every term $t$ of $\unravel_N(\inst)$ at a depth no greater than $2n$ is $\rsa$-compliant.
\end{observation}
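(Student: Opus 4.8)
Since $\rsa$-compliance is defined componentwise over the QDPs (\cref{def:compliance}), the plan is to fix a single QDP $\kkp \in \rsa$, a term $t$ of $\tree := \unravel_N(\inst)$ with $\dep{\tree}{t} \le 2n$, and a descendant $t'$ of $t$ at distance $k$, and then to produce a descendant $t_+$ at distance $\kp$ together with a partial isomorphism $f\colon \tree_{t_+} \to \tree_{t'}$ witnessing a \emph{proper} embedding. The whole argument rests on one feature of \cref{def:unfold}: each term of $\unfold(\inst)$ is a word recording the unique path that reaches it, hence is a \emph{copy} of the last $\inst$-term on that path, and any two copies of the same $\inst$-term root isomorphic sub-trees via the canonical map that rewrites one shared prefix into the other. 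I also use that in $\unfold(\inst)$ every edge carries an $\epred$-atom---rules of the form (iv) introduce $\epred$ and $\rpred$ in parallel, while the $\rsa$-rules introduce $\epred$ only---so that tree-distance agrees with the length of $\epred$-paths, which is exactly what $\kkp$ constrains.

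First I would transport the witness from $\inst$. Because $\core$ and $\trim_N$ yield sub-instances, the tree-path from $t$ to $t'$ is already a path in $\unfold(\inst)$, so it lifts to an $\epred$-path of length $k$ from $u$ to $v$ in $\inst$, where $t$ and $t'$ are copies of $u$ and $v$. As $\inst \models \rsa$, the QDP $\kkp$ furnishes an $\epred$-path of length $\kp$ from $u$ to $v$ in $\inst$; unfolding this path out of $t$ yields a descendant $t_+$ at distance $\kp$ that is \emph{again} a copy of $v$, whence the canonical copy-map $f\colon \tree_{t_+} \to \tree_{t'}$ is an isomorphism in $\unfold(\inst)$. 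The conceptual point is that unfolding destroys literal satisfaction of $\rsa$---the two paths that share the endpoint $v$ in $\inst$ now reach the \emph{distinct} terms $t'$ and $t_+$---so compliance, rather than satisfaction of the QDP, is the correct tree-level surrogate, and it is recovered precisely by this pair of copies of $v$.

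It remains to survive $\trim_N$ and $\core$ and to verify properness, which is where the constants are tuned. Since $\dep{\tree}{t} \le 2n$ and $k < \kp \le K$, both $t'$ and $t_+$ sit at depth at most $2n + K \le N$, so $\trim_N$ keeps them, and the slack $N - (2n+K) = 2n$ guarantees that $\tree_{t'}$ and $\tree_{t_+}$ are each developed for at least $2n$ levels below their roots before the cut (the $\rpred$-reachable tails being bounded by $n$ via \cref{obs:r-paths}). The delicate point---and the step I expect to be the main obstacle---is passing to the core: $\core$ acts on the whole instance, and because $t_+$ lies strictly below $t'$ the two sub-trees that are isomorphic in $\unfold(\inst)$ are truncated by $\trim_N$ to \emph{different} relative depths and so need not remain isomorphic. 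I would argue that the shared $2n$ levels of development are enough, against the modal depth $n$ that already caps all $\rpred$-paths, for $f$ to restrict to a partial isomorphism of the cored sub-trees, and then read off properness from \cref{def:embed}: $\embed(\tree, t', t_+, f)$ only redirects $t_+$ onto its isomorphic copy $t'$, so that re-unfolding, trimming at the maximal path length $\ell$ of $\tree$, and coring regenerate $\tree$; the two homomorphisms needed for that isomorphism are supplied by \cref{obs:unfold-preserves-hom,obs:trim-preserves-hom,obs:unravel-preserves-hom}. Everything else---reading copies off $\unfold$, lifting the tree-path to $\inst$, applying $\kkp$, and building $f$---is routine given the definitions.
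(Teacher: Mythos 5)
Your construction of the compliance witness is the same as the paper's, and that part is sound: the paper dispatches it with ``$\unfold(\inst)$ trivially satisfies the compliance requirement,'' which is precisely your observation that the two copies $t'$ and $t_+$ of the same $\inst$-term root isomorphic sub-trees of $\unfold(\inst)$ via the prefix-rewriting map $f$, and your bound placing $t'$ and $t_+$ at depth at most $2n + K \le N$ is the paper's reason why both copies survive $\trim_N$. The genuine gap is exactly the step you flag and then defer: why $f$ still restricts to a partial isomorphism after trimming. Your proposed mechanism --- ``the shared $2n$ levels of development are enough, against the modal depth $n$'' --- is not the operative one and cannot be repaired into a proof: the partial isomorphism must be defined on \emph{all} of the trimmed sub-tree $\tree_{t_+}$, in particular on terms at absolute depth near $N$ and on the $\rpred$-tails kept past $N$ by the exception clause of \cref{def:trim}, and the amount of shared development high above the cut says nothing about those boundary terms. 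No uniform slack can help, since the two sub-trees are truncated at relative depths that differ by $\kp - k$ no matter how large $N$ is. (You also locate the difficulty in $\core$; it is really a $\trim$ issue --- the paper treats $\core$ as trivially compliance-preserving.)

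What the paper actually does at this step is a short case analysis on why a descendant $s_+$ of $t_+$ survives $\trim_N$, resting on two facts: $f$ sends every term to a term exactly $\kp - k$ shallower, and $f$, being an isomorphism of the untrimmed sub-trees, carries $\rpred$-paths to $\rpred$-paths. If $s_+$ survives because its depth is at most $N$, then $f(s_+)$ is strictly shallower and survives as well. If $s_+$ survives because it is $\rpred$-reachable from a term at depth at most $N$, then either that term lies inside $\tree_{t_+}$, in which case its image (again shallower, hence of depth at most $N$) witnesses $\rpred$-reachability of $f(s_+)$, or the $\rpred$-path passes through $t_+$, in which case $t'$ itself, at depth $2n + K \le N$, witnesses it; either way $f(s_+)$ survives. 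Hence $f$ maps the trimmed $\tree_{t_+}$ into the trimmed $\tree_{t'}$, and since trimming removes terms together with all incident atoms, this restriction remains an isomorphism onto an induced sub-instance, i.e.\ the required partial isomorphism. Note that this argument uses only depth-monotonicity of $f$ and preservation of $\rpred$-paths; the extra $2n$ of slack in $N = 4n + K$ that you lean on plays no role here (it is needed later, in the proof of \cref{lem:main-technical}). Your closing properness sketch via homomorphic equivalence of cores is reasonable, but it presupposes the surviving partial isomorphism that only a case analysis of this kind provides.
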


\begin{proof} We prove that $\unravel_N(\inst) = \core \circ \trim_N \circ \unfold$ preserves the compliance requirement by arguing that each of the composed operations preserves it. First, $\unfold(\inst)$ trivially satisfies the compliance requirement. Second, assume toward a contradiction that the result $\jnst = (\trim_N \circ \unfold)(\inst)$ does not satisfy that requirement. Let $t$ and $t'$ be two terms of $\jnst$ that are missing a compliance witness for $\kkp \in \rsa$. Let $\pair{t', t_+, f}$ be the witness for $t$ and $t'$ from $\unfold(\inst)$. Observe that $\pair{t', t_+, f}$ cannot be used as a compliance witness for $t$ in $\jnst$ since there exists a term $s_+$ that is descendant $t_+$ in $\jnst$ such that $f(s_+)$ is not a term of $\jnst$. This, however, is a contradiction for the following three reasons: (1) Both $t'$ and $t_+$ are terms of $\jnst$ as the depth of $t$ is $\leq 2n$, and thus, the depths of $t'$ and $t_+$ do not exceed $2n + K \leq N$. (2) The paths in $\unfold(\inst)$ from $t_+$ to $s_+$ and from $t'$ to $f(s_+)$ are isomorphic. (3) The depth of $s_+$ in $\unfold(\inst)$ is $\kp - k$ greater than that of $f(s_+)$.

Note that if $s_+$ is reachable by an $\rpred$ path from $t_+$, then the same can be said about the path from $t'$ to $f(s_+)$. Therefore, $\trim_N$ should preserve $f(s_+)$ as $t'$ is at depth smaller than $N$ in $\unfold(\inst)$. If $s_+$ is not reachable from $t_+$ by an $\rpred$-path, it means the same for the path from $t'$ to $f(s_+)$; however, as the depth of $s_+$ is greater than that of $f(s_+)$ in $\unfold(\inst)$ we we know that $s_+$ should be removed by $\trim_N$ as $f(s_+)$ is. Finally, observe that $\core(\jnst)$ trivially satisfies the compliance requirement.
\end{proof}

Enough groundwork has been laid to define the notion of a template. We encourage the reader to briefly skim through the rest of this section to understand how templates will be used to prove Lemma~\ref{lem:existential-decidability-problem} as this justifies the imposition of the four properties mentioned in the definition below.

%In order to decide the entailment stated in Lemma~\ref{lem:existential-decidability-problem} we will show that the information within the chase can be represented through a finite object called a {\em template}. We define the set of properties such an object should satisfy below, but encourage the reader to briefly skim through the rest of this section to understand how templates will be used to prove Lemma~\ref{lem:existential-decidability-problem}.

\begin{definition}\label{def:map}
    A \emph{template} $\map$ is a multi-tree that satisfies the following properties:
    \begin{enumerate}
        \item $\unravel_N(\map)$ is isomorphic to $\map$. \label{map:trim-id-on-map}\label{map:cored}
        \item Every term of $\map$ at a depth no greater than $2n$ is $\axioms$-compliant.\label{map:shallow-compliance}
        \item $\map$ is a model of $\rsp$. \label{map:modelhood}
        \item The root of $\map$ is labeled with $\ppred_{\theform}$. \label{map:phi-at-root}
    \end{enumerate}
\end{definition}
    %\item For every witness $\pair{t', t_+, h}$ of $A$-compliance of $t$ at depth no greater than $2n$ in $\map$ the following holds: $\unravel_N \circ \embed(\map, t_+, t', h)$ is isomorphic to $\map$. \label{map:embedding-and-unfolding}

The core idea behind a template is that it serves as both a relaxation of the chase (as witnessed by \cref{obs:chase-to-map} below), yet is sufficiently strong enough to 
encode at least one element of the chase (see \cref{lem:map-to-chase} below).

%The core idea behind the template is straightforward. In order to finalize the proof we need a characterization of a family of structures that on the one hand contains a certain relaxation of the chase (\cref{obs:chase-to-map}), and the other their properties are strong enough that any template can be used to guide chase derivation in a satisfactory way (\cref{lem:map-to-chase,lem:main-technical}, and \cref{def:following-map}).

\begin{observation}\label{obs:chase-to-map}
For any $\ch \in \chases$, $\unravel_N(\ch)$ is a template. Moreover, if $\predcontr\not\in \ch$, then $\predcontr\not\in \unravel_N(\ch)$.
\end{observation}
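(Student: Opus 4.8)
The plan is to verify the four defining properties of a template (\cref{def:map}) for the instance $\unravel_N(\ch)$, and then separately handle the $\predcontr$ claim. Recall that $\unravel_N = \core \circ \trim_N \circ \unfold$, and that by \cref{cor:chase-is-dag} every $\ch \in \chases$ is a DAG rooted at $a$, so all of the unfolding machinery applies. I would take the four properties in the order they are listed, leaning on the observations already established earlier in this section.

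\textbf{Properties (1)--(3).} Property~(\ref{map:cored}) is idempotence: I would invoke the already-stated fact that $\unravel_N$ is idempotent, so $\unravel_N(\unravel_N(\ch))$ is isomorphic to $\unravel_N(\ch)$; that is exactly the requirement that $\unravel_N(\map)$ be isomorphic to $\map$ for $\map = \unravel_N(\ch)$. For property~(\ref{map:modelhood}), I note that $\ch \in \chases$ is a fair derivation limit, hence $\ch \models \rsp \cup \rsa$, and in particular $\ch \models \rsp$; then \cref{obs:unravel-preserves-rsp} yields $\unravel_N(\ch) \models \rsp$ directly. Property~(\ref{map:shallow-compliance}), shallow $\rsa$-compliance, is the content of \cref{obs:unravel-is-compliant}: since $\ch \models \rsa$ and $\ch$ is a rooted DAG, every term of $\unravel_N(\ch)$ at depth at most $2n$ is $\rsa$-compliant. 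So each of these three properties is essentially a one-line appeal to a prior result.

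\textbf{Property (4) and the $\predcontr$ clause.} For property~(\ref{map:phi-at-root}), recall $\dbz = \set{\ppred_\phi(a)}$, so $\ppred_\phi(a) \in \ch$ and $a$ is the root. Unfolding sends $a$ (the root $r$ with empty prefix $\empstr$) to itself and preserves its unary labels, so $\ppred_\phi$ labels the root of $\unfold(\ch)$; trimming to depth $N \geq 0$ keeps the root, and $\core$ must fix the root label since removing it would break the homomorphism back into the instance. Hence the root of $\unravel_N(\ch)$ carries $\ppred_\phi$. Finally, for the contrapositive-style second sentence, I would use the stated \textbf{Remark} that $\unfold$, $\trim$, and $\unravel$ all preserve $\predcontr$: by construction these operations only \emph{add} $\predcontr$ when it is already present in the input and never remove it, so $\predcontr \notin \ch$ forces $\predcontr \notin \unravel_N(\ch)$.

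The main subtlety, and the step I expect to require the most care, is reconciling property~(\ref{map:shallow-compliance}) with the \emph{output} of $\core$: \cref{obs:unravel-is-compliant} already accounts for $\unfold$ and $\trim_N$, but one must confirm that applying $\core$ afterward does not destroy compliance witnesses for shallow terms. The cited observation handles this by noting $\core(\jnst)$ ``trivially'' preserves compliance, the intuition being that a core-homomorphism maps compliance witnesses to compliance witnesses and, by idempotence of $\unravel_N$, the result is isomorphic to its own core. I would double-check that the depth bound $2n + K \leq N$ used there genuinely guarantees all relevant witnesses survive $\trim_N$, since that inequality (with $N := 4n + K$) is exactly what makes the whole argument go through; everything else is a routine bookkeeping appeal to the homomorphism-preservation observations \ref{obs:unfold-preserves-hom}, \ref{obs:trim-preserves-hom}, and \ref{obs:unravel-preserves-hom}.
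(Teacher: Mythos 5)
Your proof is correct and takes essentially the same route as the paper's: properties (\ref{map:trim-id-on-map})--(\ref{map:modelhood}) are discharged by idempotence of $\unravel_N$, \cref{obs:unravel-is-compliant}, and \cref{obs:unravel-preserves-rsp} respectively, property (\ref{map:phi-at-root}) is immediate, and the $\predcontr$ clause holds because $\unravel$ never introduces new nullary atoms. Your additional care about $\core$ preserving compliance and the root label goes beyond what the paper writes (it treats these points as trivial or already handled inside \cref{obs:unravel-is-compliant}), but it is consistent with and subsumed by the cited observations.
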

\begin{proof} We argue that $\unravel_N(\ch)$ satisfies properties (\ref{map:trim-id-on-map}) -- (\ref{map:phi-at-root}) of \cref{def:map}. Observe that (\ref{map:trim-id-on-map}) follows from the idempotence of $\unravel_N$, (\ref{map:shallow-compliance}) follows from~\cref{obs:unravel-is-compliant}, (\ref{map:modelhood}) follows from~\cref{obs:unravel-preserves-rsp}, and (\ref{map:phi-at-root}) is trivial. As $\unravel$ does not introduce new nullary atoms, the last assertion is satisfied as well.
\end{proof}

In essence, the above observation tells us that if there exists a $\ch \in \chases$ such that $\predcontr \not\in \ch$, then there exists a template witnessing this fact. To complete our decidability argument however, we will also need to establish the converse of this, i.e. the lemma below. Importantly, the definition of a template has been tailored so that the proof of this more difficult direction goes through.

%Note that this part of the proof is quite elementary. It, also might seem that the properties of templates are somewhat arbitrary at the moment. Consider, however, the following meaning of the above: \emph{If $\rsp \cup \rsa, \dbz \not\models \predcontr$ then there exists a template witnessing that}. Importantly, the definition of the template is tailored so that the other, and significantly harder direction holds as well. We get it from the following lemma:

\begin{lemma}\label{lem:map-to-chase} If $\map$ is a template such that $\predcontr\not\in \map$, then there exists a $\ch \in \chases$ such that $\predcontr\not\in \ch$.
%    For every template $\map$, there exists a $\ch \in \chases$ such that $\predcontr\not\in \map$ implies $\predcontr\not\in \ch$.
\end{lemma}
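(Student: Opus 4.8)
The plan is to turn the finite template $\map$ into an infinite $\predcontr$-free model of $\rsp \cup \rsa$ and then extract from that model a chase element mapping homomorphically into it. The heart of the construction is to promote the \emph{shallow} compliance of property (\ref{map:shallow-compliance}) — guaranteed only down to depth $2n$ — into \emph{global} compliance, and to realise that compliance as genuine $\rpred$-edges. I would build an increasing chain of structures $\map = \tree_0, \tree_1, \tree_2, \ldots$ in which the frontier of compliant terms is pushed progressively deeper: given the current structure, for each term $t'$ at distance $k$ below the frontier and each QDP $\kkp \in \axioms$, I take a compliance witness $\pair{t', t_+, f}$ (\cref{def:compliance}) and perform the corresponding proper embedding (\cref{def:embed}), which completes an $\rpred^{k_+}$-path to $t'$ parallel to the existing $\rpred^{k}$-path. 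Properness of the embedding is what guarantees that the structure freshly exposed below the frontier again agrees with $\map$; this is precisely where the self-similarity $\unravel_N(\map) \cong \map$ of property (\ref{map:trim-id-on-map}) is essential, since it re-establishes shallow compliance — and hence the ability to keep embedding — at each new frontier. Taking the limit of this process yields an infinite rooted DAG $\model$ in which every QDP-path has been completed.

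I would then verify that $\model$ is a genuine first-order model of $\rsp \cup \rsa$ leaving no trigger active. Satisfaction of $\rsp$ is inherited from property (\ref{map:modelhood}), because, exactly as in \cref{obs:unravel-preserves-rsp}, each operation used preserves rules (\ref{rule:head-conjunction})--(\ref{rule:contr}). Satisfaction of each $\kkp \in \axioms$ is what global compliance buys: wherever $\modr^{k}(x,y)$ holds, the embedding step has produced a parallel path witnessing $\modr^{k_+}(x,y)$ with the same endpoints, so $\kkp$ holds throughout $\model$ and no $\axioms$-trigger remains active. Moreover $\predcontr \notin \model$, since the construction starts from the $\predcontr$-free $\map$ and only copies and identifies existing labels, never introducing the nullary atom; and the root of $\model$, which we take to be the constant $a$, carries $\ppred_{\theform}$ by property (\ref{map:phi-at-root}), so $\model$ is a model of $\dbz = \set{\ppred_{\theform}(a)}$.

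Finally, I would descend from the model $\model$ to a chase element. Because $\model$ is a model of $(\dbz, \rsp \cup \rsa)$, the completeness of the disjunctive restricted chase (every model of the rules is the target of a homomorphism from some chase branch) supplies a fair derivation sequence whose union $\ch \in \chases$ admits a homomorphism $h \colon \ch \to \model$. Explicitly, one drives the derivation by resolving each instance of the disjunctive rule (\ref{rule:head-disjunction}) in accordance with the disjunct realised in $\model$, maintaining ``maps homomorphically into $\model$'' as a loop invariant, with fairness guaranteed by the fact that $\model$ leaves no trigger active indefinitely. Since homomorphisms preserve the nullary atom $\predcontr$ and $\predcontr \notin \model$, it follows that $\predcontr \notin \ch$, which is precisely the desired conclusion.

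The principal obstacle is the first step: converting shallow, depth-bounded compliance into global compliance while simultaneously keeping $\model$ a model of $\rsp$. One must check that successive proper embeddings compose coherently — that completing one QDP-path neither destroys earlier compliance witnesses nor falsifies rules (\ref{rule:head-conjunction})--(\ref{rule:contr}) — and that $\unravel_N(\map) \cong \map$ genuinely re-exposes a template-like top at each new frontier, so the process never stalls. The depth bookkeeping behind \cref{obs:unravel-is-compliant}, in particular the inequality $2n + K \le N$ placing every relevant witness within the unraveled region, is what licenses this propagation, and carrying it faithfully through the infinite construction is the delicate core of the argument.
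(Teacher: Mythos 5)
Your overall architecture is: (1) inflate the finite template $\map$ into an infinite $\predcontr$-free model $\model$ of $\rsp \cup \rsa$, then (2) use universality of the disjunctive restricted chase to extract a branch $\ch \in \chases$ with a homomorphism into $\model$, hence $\predcontr \notin \ch$. Step (2) is fine and standard (it is essentially the same invariant-maintenance argument the paper runs, with $\model$ in place of $\map$). The genuine gap is step (1), and it is not a matter of missing bookkeeping: the mechanism you propose for building $\model$ does not work. First, property (\ref{map:trim-id-on-map}) of \cref{def:map} is \emph{not} self-similarity. The statement $\unravel_N(\map) \cong \map$ only says that $\map$ is a fixed point of $\core \circ \trim_N \circ \unfold$, i.e.\ that $\map$ is a core whose depth is suitably bounded (at most $N+n$). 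It does \emph{not} say that a copy of $\map$, or of its compliant top, reappears below deep terms: the subtree below a term at depth $d$ is a shallow remnant of depth at most $N+n-d$, and nothing in \cref{def:map} grants $\axioms$-compliance to any term of depth greater than $2n$. Consequently your frontier-pushing stalls after the first round: the new $R^{k}$-paths created by your own merges require witnesses at terms where property (\ref{map:shallow-compliance}) gives you nothing, and there is no principle that ``re-establishes shallow compliance at each new frontier.'' Second, $\embed$ (\cref{def:embed}) is a quotient, not an extension: $\embed(\tree,t',t_+,f) = \bar{f}(\tree)$ merges the subtree at $t_+$ onto the subtree at $t'$, so the ``increasing chain $\map = \tree_0 \subseteq \tree_1 \subseteq \cdots$'' is not coherent as described; moreover, after a single merge the structure is no longer a multi-tree, so compliance and proper embeddability (both defined for multi-trees) cannot even be re-invoked on $\tree_1$ without first unravelling again, which undoes the merge.

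The paper's proof succeeds precisely because it never attempts what your step (1) attempts. It never builds a global model from $\map$, and it never needs compliance beyond depth $2n$. Instead it maintains a \emph{depth-bounded} invariant along a fair derivation sequence: \cref{def:following-map} requires only a homomorphism from $\unravel_n(\inst)$ (note $n$, not $N$) into $\map$, together with label-coherence among the unfolded copies of each term, and \cref{lem:main-technical} shows this invariant can be preserved at every trigger application. Since every term of $\unravel_n(\inst)$ sits at depth at most $2n$, and homomorphisms from rooted multi-trees preserve depth, the images in $\map$ land exactly in the region where property (\ref{map:shallow-compliance}) applies; deep QDP applications in the chase produce only unlabelled scaffolding that is trimmed or cored away and hence never needs to be matched against $\map$ at all. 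That ``forgetting'' of the deep part is the key idea your proposal is missing, and it is what makes \emph{shallow} compliance sufficient --- whereas your route requires global compliance, which templates simply do not provide.
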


\begin{figure*}
    \centering
    \includegraphics[width=0.7\linewidth]{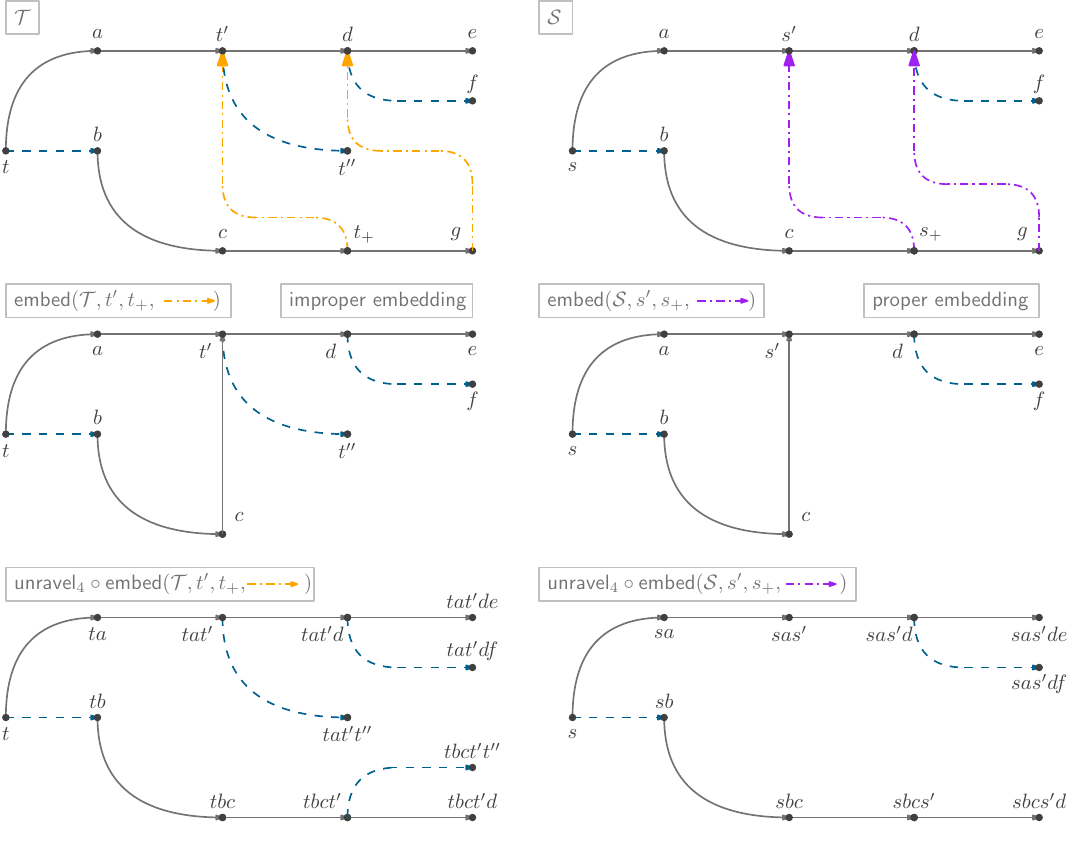}
    \caption{Two embeddings along with their unravelings where dash-dotted lines represent partial isomorphisms. The left embedding is improper as witnessed by an additional outgoing $\rpred$-atom of $t_+$. The right embedding is proper, as witnessed by its unraveling being isomorphic to $\mathcal{S}$. Note that in both cases $t_+$ and $s_+$ are deeper then the counterparts $t'$ and $s'$, and thus the sub-trees of $tbct'$ and $sbcs'$ are only partially isomorphic to the respective sub-trees $tat'$ and $sas'$.}
    \label{fig:embedding-compliance}
\end{figure*}

We shall prove the above in the next section. Before that, we show how the lemma fits into the scheme of our decidability proof. From Observation~\ref{obs:chase-to-map} and Lemma~\ref{lem:map-to-chase} above we have:

\begin{corollary}\label{cor:map-equiv-chase} There exists a template $\map$ such that $\predcontr\not\in \map$ \iffi there exists a $\ch \in \chases$ such that $\predcontr\not\in \ch$.
\end{corollary}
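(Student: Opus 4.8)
The plan is to prove both directions of the biconditional by directly assembling the two results that immediately precede it, namely \cref{obs:chase-to-map} and \cref{lem:map-to-chase}. Each direction is already isolated in one of these results, so no additional machinery is needed: the corollary amounts to packaging the forward and backward implications together.

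For the right-to-left direction, I would assume there exists some $\ch \in \chases$ with $\predcontr \not\in \ch$, and I would take $\map := \unravel_N(\ch)$ as the witnessing template. By \cref{obs:chase-to-map}, $\unravel_N(\ch)$ is indeed a template, and the same observation guarantees that $\predcontr \not\in \ch$ entails $\predcontr \not\in \unravel_N(\ch)$. Hence $\map$ is a template free of $\predcontr$, as required. For the left-to-right direction, I would assume there exists a template $\map$ with $\predcontr \not\in \map$; this is exactly the hypothesis of \cref{lem:map-to-chase}, whose conclusion supplies a $\ch \in \chases$ with $\predcontr \not\in \ch$.

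The genuine difficulty of this equivalence is not located in the corollary itself but entirely in \cref{lem:map-to-chase}, whose proof is deferred to the next section. The right-to-left direction is essentially bookkeeping: it consists of checking that the operation $\unravel_N$ converts a chase element into an object satisfying the four defining properties of a template (idempotence of $\unravel_N$ for property (\ref{map:trim-id-on-map}), \cref{obs:unravel-is-compliant} for (\ref{map:shallow-compliance}), \cref{obs:unravel-preserves-rsp} for (\ref{map:modelhood}), and triviality for (\ref{map:phi-at-root})) while preserving the absence of $\predcontr$ — all already encapsulated in \cref{obs:chase-to-map}. Thus I expect no obstacle at this stage, and the real effort is reserved for establishing that a contradiction-free template can always be ``reinflated'' into a contradiction-free member of $\chases$.
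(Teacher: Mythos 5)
Your proof is correct and matches the paper exactly: the corollary is stated there as an immediate consequence of Observation~\ref{obs:chase-to-map} (giving the chase-to-template direction via $\unravel_N$) and Lemma~\ref{lem:map-to-chase} (giving the template-to-chase direction). You also correctly identify that all substantive work lives in Lemma~\ref{lem:map-to-chase}, whose proof the paper defers to the following subsection.
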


One can leverage the above to decide whether $\chases \models \predcontr$ by enumerating all the templates (of which there are only a finite  number to consider) and checking whether there exists one that does not contain $\predcontr$. From this we get Lemma~\ref{lem:existential-decidability-problem}, and thus, the decidability result (Theorem~\ref{thm:main}).
The complexity of this decision procedure is straightforward to determine:
\begin{enumerate}
    \item Checking whether a multi-tree is a template is trivially in \textsc{PSpace} with respect to the size of the input multi-tree.
    \item The size of a template is at most exponential in size of the modal formula $\theform$ and $\rsa$ as template branching is bounded by $2^n$ and its depth is at most $N + n$.
\end{enumerate}
Therefore, one can iterate in \textsc{\algcomplexity} through all multi-trees of exponential size and check whether they are templates or not.

\subsection{Proof of Lemma~\ref{lem:map-to-chase}}

\begin{comment}
We start the proof by inspecting the rules of $\rsp \cup \rsa$ and noting:
\begin{observation}
    Every element of any fair derivation sequence of $\dbz$ and $\rsp \cup \rsa$ is a DAG rooted at $a$.
\end{observation}
\end{comment}

\begin{definition}\label{def:following-map}
    Given a DAG $\inst$ rooted at $a$ and a template $\map$ we say that $\inst$ \emph{follows} $\map$ through homomorphism $h$ \iffi the following hold:
    \begin{enumerate}
        \item $h$ is a homomorphism from $\unravel_n(\inst)$ to $\map$.
        \item For every term $t$ of $\inst$ and all words $w, w'$, if $wt$ and $w't$ are terms of $\unravel_n(\inst)$, then $\labels(h(wt)) = \labels(h(w't))$.
    \end{enumerate}
\end{definition}

\noindent
Note that in the above definition we use $\unravel_n$ as opposed to $\unravel_N$. The following is the main result of this section:
\begin{lemma}\label{lem:main-technical}
  For every instance $\inst$, $(\rsp \cup \rsa)$-trigger $\pi$ in $\inst$, and template $\map$, if $\inst$ follows $\map$, then there exists an $\inst' \in \apply{\inst, \pi}$ such that it follows $\map$.
\end{lemma}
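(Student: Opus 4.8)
The plan is to fix a homomorphism $h$ witnessing the hypothesis ``$\inst$ follows $\map$'' and to proceed by a case distinction on the rule $\erule$ underlying the trigger $\pi = \pair{\erule, g}$, where $g$ maps $\body{\erule}$ into $\inst$. In each case I produce one disjunct of $\head{\erule}$ whose addition gives the sought $\inst' \in \apply{\inst, \pi}$, together with an extension $h'$ of $h$ witnessing that $\inst'$ follows $\map$. To avoid reasoning about how $\core$ reacts to the added atoms, I would carry out the construction at the level of $\trim_n \circ \unfold(\inst')$: since $\unravel_n(\inst')$ is a retract of $\trim_n \circ \unfold(\inst')$, it suffices to build a label-consistent homomorphism out of the latter and then restrict along the core retraction. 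Two preliminary remarks organize all cases. First, $\unravel_n(\inst)$ has depth at most $2n$ by \cref{cor:trim-bound-r-paths}, and one checks that $h$ sends the root to the root and edges downward (so $h$ does not increase depth); hence every copy $h(wt)$ of a term $t$ lands at depth at most $2n$ in $\map$ and is therefore $\axioms$-compliant by property~(\ref{map:shallow-compliance}) of \cref{def:map}. Second, clause~(2) of \cref{def:following-map} guarantees that all copies of a fixed $t$ are sent by $h$ to nodes of $\map$ carrying the \emph{same} label set $L$, a fact used throughout.

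The datalog cases are uniform. For the non-disjunctive rules~(\ref{rule:head-conjunction}), (\ref{rule:dia}), and (\ref{rule:contr}), the trigger appends unary atoms---or, for (\ref{rule:contr}), the atom $\predcontr$---to a term $t = g(x)$; since $\map \models \rsp$ (property~(\ref{map:modelhood})) and each image $h(wt)$ already carries the labels firing the rule, the corresponding consequents (including $\predcontr$, when relevant) already occur in $\map$, so $h$ remains a homomorphism after the update, and label-consistency is immediate because one and the same atom is added to every copy of $t$. For the disjunctive rule~(\ref{rule:head-disjunction}), clause~(2) forces every image $h(wt)$ to share a single label set $L$, and since $\map \models \rsp$ this $L$ contains $\ppred_{\psi}$ or $\ppred_{\chi}$ \emph{uniformly}; I pick the matching disjunct, leaving $h$ a homomorphism. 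For the existential rule~(\ref{rule:box}) (the $\mldia$-rule), each $h(wt)$ carries $\ppred_{\mldia\psi}$, so modelhood of $\map$ supplies an $\epred/\rpred$-successor labelled $\ppred_{\psi}$, to which I send the corresponding copy of the fresh witness; arranging these successors to be chosen with a \emph{common} label set across all copies of $t$ (so that clause~(2) survives for $\inst'$) is the delicate point, and I would discharge it using label-consistency of the copies of $t$ together with the multi-tree shape of $\map$.

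The substantive case is an $\rsa$-rule $\kkp$, whose body is an $\rpred$-path of length $k$ from $t = g(x)$ to $t' = g(y)$ and whose head demands a fresh $\rpred$-path of length $\kp$ between the same endpoints. Here I would first transport the body-path through $h$: a copy of it in $\trim_n \circ \unfold(\inst)$ maps to a \emph{descending} path of length $k$ in $\map$ from $v := h(wt)$ to some $v' := h(w't')$ at distance $k$ below $v$ (descending, because $\map$ is acyclic and $h$ sends edges downward). As $v$ is $\axioms$-compliant, compliance for $\kkp$ yields a node $v_+$ at distance $\kp$ below $v$ that is \emph{properly} embeddable in $v'$ via a partial isomorphism $f$ (\cref{def:compliance,def:embed}). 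I then route the fresh path $\rpred(t,z_1),\dots,\rpred(z_{\kp-1},t')$ onto the length-$\kp$ path from $v$ to $v_+$ in $\map$, and map the newly unfolded copy of the subtree below $t'$ into $\map_{v_+}$ by pulling the old homomorphism back along $f$. Since $f$ is a partial isomorphism it preserves labels, so $v_+$ and $v'$---hence the new and old copies of $t'$---carry identical labels, which secures label-consistency; and properness of the embedding (that $\unravel_{\ell}\circ\embed(\map,v',v_+,f)$ is isomorphic to $\map$) is precisely what guarantees this pullback is a well-defined homomorphism into the \emph{unchanged} $\map$.

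I expect this last case to be the main obstacle, on two fronts. First is the depth bookkeeping that makes compliance applicable: the nodes $v, v', v_+$ must all sit within the compliant region, which relies on the $\rpred$-path bound of \cref{obs:r-paths}, on the use of $\unravel_n$ (rather than $\unravel_N$) in \cref{def:following-map}, and on the slack in $N = 4n + K$ ensuring the length-$\kp$ path and the embedded subtree remain inside $\map$. Second, and more delicate, is verifying that the induced map on the freshly unfolded subtree below $t'$ is a genuine homomorphism whose restriction to $\unravel_n(\inst')$ satisfies \emph{both} clauses of \cref{def:following-map} at once; this is exactly where properness of the embedding---not mere embeddability---is indispensable, and where the interplay between the added $\rpred$-edges and the $\core$ inside $\unravel_n$ must be tracked with care. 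The analogous requirement of uniform witness labels already surfaces in rule~(\ref{rule:box}), so I would isolate it as a single sub-claim and reuse it in both the $\mldia$-case and the $\rsa$-case.
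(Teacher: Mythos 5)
Your proposal follows essentially the same route as the paper's proof: the same case split on $\erule$, the same use of property~(\ref{map:modelhood}) of \cref{def:map} together with clause~(2) of \cref{def:following-map} to settle the $\rsp$ rules (with the disjunct for rule~(\ref{rule:head-disjunction}) chosen uniformly thanks to label-consistency of the copies), and, for $\erule = \kkp \in \rsa$, the same use of a compliance witness $\pair{u', u_+, f}$: the fresh path is routed onto the path from $u = h(s)$ to $u_+$, and the new copy of the subtree below $t'$ is sent into $\tree_{u_+}$, with properness of the embedding making this consistent. (The paper obtains the needed homomorphism $\tree_{s_+} \to \tree_{u_+}$ from \cref{obs:unravel-preserves-hom} and the identities $\tree_{s_+} = \core \circ \trim_{n-(\kp-k)}(\tree_{s'})$ and $\tree_{u_+} = \core \circ \trim_{N-(\kp-k)}(\tree_{u'})$, rather than by pulling back along $f$, but this is the same use of properness.)

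There is, however, a genuine gap in your $\rsa$ case: you assume every surviving copy of the trigger image has a surviving anchor, i.e.\ that for each copy $wt$ of $t$ the copy $w \sbullet \word(P) \sbullet t'$ of the old endpoint is still present, so that the body path ``maps to a descending path of length $k$ in $\map$ from $v := h(wt)$ to $v' := h(w't')$''. This can fail. The body and head of a QDP are $\epred$-paths, not $\rpred$-paths (you have the two predicates swapped in this case), and $\trim_n$ only protects elements beyond depth $n$ that are $\rpred$-reachable from elements at depth at most $n$; so when $|w| < n < |w| + k$ and the old endpoint's copy enjoys no such protection, that copy is pruned, while the first $n - |w|$ fresh terms of the new (purely $\epred$-atom) path do survive into $\trim_n \circ \unfold(\inst')$. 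For those copies there is no $v'$, compliance cannot be invoked, and your construction assigns them no image in $\map$. The paper makes exactly this case distinction (on whether $s' \in \unravel_n(\inst')$) and disposes of the pruned case separately: the surviving prefix of the copy of $P_+$ is mapped onto the corresponding copy of $P$, whose terms at those depths do survive, so the fresh elements are folded away by $\core$ and never need an image in $\map$. You would need to add this case. Beyond that, the two points you defer as sub-claims---choosing rule~(\ref{rule:box}) witnesses with a common label set across all copies of $t$, and re-verifying clause~(2) after the compliance extension---are left unproven in your write-up; the paper is admittedly also terse there (it settles rule~(\ref{rule:box}) via \cref{obs:chase-preserves-homs}, since $\map$ has no active $\rsp$-triggers, and argues clause~(2) through the sets $S$, $S'$, $S_+$ of copies), so flagging them is fair, but as it stands they remain gaps in the proposal rather than discharged steps.
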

\begin{proof} Let $\pi = \pair{\erule, f}$, and $h$ be the homomorphism from $\unravel_n(\inst)$ to $\map$. We make a case distinction, assuming $\erule\in \rsp$ first, and assuming $\erule\in \rsa$ second.

Let $\erule\in \rsp$ and suppose $\erule$ is non-disjunctive. As the template $\map$ is a model of $\rsp$ there are no active triggers in it for any rule of $\rsp$. Thus, $\chase{\map, \rsp} = \set{\map}$. The rest follows from Observation~\ref{obs:chase-preserves-homs}, as the instance in the singleton $\apply{\inst, \pi}$ can follow the template $\map$ simply through $h$. If $\erule\in \rsp$ is disjunctive, then $\erule$ is of the form (\ref{rule:head-disjunction}) in \cref{def:rsp}. Let $t = f(x)$ for $x$ as in (\ref{rule:head-disjunction}). From the fact that $\inst$ follows the template $\map$ we know that all the copies of $t$ that were not removed by $\trim$ or $\core$ are mapped to the set $U$ of terms of $\map$ sharing the same set of labels (by Definition~\ref{def:following-map}). As $\map$ is a model of $\rsp$ we can conclude that there exists a symbol $\spred$ from $\sig$ such that $\spred(u) \in \map$ for each element $u$ of $U$ and that $\spred(x)$ is a disjunct in $\head{\erule}$. From this we can set $\inst'$ to be $\inst \cup \set{\spred(t)}$. We then have that $\inst'$ follows $\map$ through $h$.

Let us now assume that $\erule = k \rightarrow \kp \in \rsa$. Let $\inst'$ be the result of applying $\pi$ to $\inst$, $P(t,t')$ be the path from some $t$ to $t'$ witnessing that $\pi$ was not satisfied in $\inst$ and $P_+(t,t')$ be the freshly created path in $\inst'$. Let $s$ denote term $wt$ of $\unravel_n(\inst)$ for some word $w$. Let $s' = w\sbullet\word(P)\sbullet t'$ and $s_+ = w\sbullet\word(P_+(t,t'))\sbullet t'$. We consider two cases depending on whether $s' \in \unravel_n(\inst')$.
    
Assume $s' \not\in \unravel_n(\inst')$. This case is trivial. Note that elements created by $\unfold$ that appear on the copy of $P_+$ %between $s$ and $s'$ 
    can be homomorphically mapped to respective elements of the copy of $P$. Note that the remaining part of the copy of $P_+$ is no longer than that of $P$. Therefore, those elements can be ``removed'' by the $\core$ operation and will not have to be homomorphically mapped to $\map$. 
    
Assume $s' \in \unravel_n(\inst')$. Note that from \cref{cor:trim-bound-r-paths} we have $|w| + |P| < 2n$. Thus we can use the Property~\ref{map:shallow-compliance} of template $\map$ for its nodes $u = h(s)$ and $u' = h(s')$. Let $\pair{u', u_+, g}$ be the $\kkp$ compliance witness for $u$. We can simply extend $h$ to map the sub-tree $\tree_{s_+}$ of $s_+$ to the sub-tree $\tree_{u_+}$ of $u_+$ and the path from $s$ to $s_+$ to map to the path from $u$ to $u_+$. Note that the path from $s$ to $s_+$ contains only terms that have no labels (disregarding $s$ and $s_+$). Observe that the sub-tree $\tree_{s_+}$ is by definition equal to $\core \circ \trim_{n - (\kp - k)}(\tree_{s'})$ where $\tree_{s'}$ is the sub-tree rooted at $s'$, and that $\tree_{u_+}$ is simply equal to $\core \circ \trim_{N - (\kp - k)}(\tree_{u'})$ where $\tree_{u'}$ is the sub-tree rooted at $u'$. This follows from the fact that the embedding from $u_+$ to $u'$ is proper. As there exists a homomorphism from $\tree_{s'}$ to $\tree_{u'}$ by \cref{obs:unravel-preserves-hom} we know there exists a homomorphism from $\tree_{s_+}$ to $\tree_{u_+}$ since $N > n$.

All that remains is to show that $\inst'$ and such extended $h$ adheres to the second point of Definition~\ref{def:following-map}. Let $S$ be the set of all copies of $t$ appearing in $\unravel_n(\inst')$ that satisfy $s\sbullet\word(P) \in \unravel_n(\inst')$ for each $s \in S$, let $S'$ be the set $\set{s\sbullet\word(P) \mid s \in S}$, and $S_+$ be the set $\set{s\sbullet\word(P_+) \mid s \in S}$. By assumption, elements of $h(S')$ share the same labels, and thus, $h(S_+)$ share same set of labels as $h(S')$ due to how $h$ was extended using Property~\ref{map:shallow-compliance} of the template. This reasoning trivially extends to sub-trees of elements in $S'$ and $S_+$ 
\end{proof}

By inductively applying \cref{obs:chase-has-minimal} and \cref{lem:main-technical} we obtain:

\begin{lemma} Given a template $\map$ there exists an instance $\ch \in \chase{\dbz, \rsp \cup \rsa}$ such that $\unravel_n(\ch)$ homomorphically maps to $\map$.
\end{lemma}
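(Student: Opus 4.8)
The plan is to build, by induction, a fair derivation sequence $\set{\inst_i}_{i \in \nats}$ of $\dbz$ and $\rsp \cup \rsa$ every element of which follows $\map$ in the sense of \cref{def:following-map}, and then to set $\ch := \bigcup_{i \in \nats} \inst_i$, so that $\ch \in \chase{\dbz, \rsp \cup \rsa}$ by the definition of the chase. Throughout, \cref{obs:chase-has-minimal} guarantees that each $\inst_i$ is a DAG rooted at $a$, so the notion of following $\map$ is meaningful at every stage, and it remains only to propagate the ``follows'' property through the construction and then to read off the desired homomorphism out of $\unravel_n(\ch)$.

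First I would treat the base case $\inst_0 := \dbz = \set{\ppred_{\theform}(a)}$. Here $\unravel_n(\inst_0)$ is just $\set{\ppred_{\theform}(a)}$, so sending $a$ to the root of $\map$ is a homomorphism into $\map$, since that root carries the label $\ppred_{\theform}$ by Property~(\ref{map:phi-at-root}) of \cref{def:map}; the second clause of \cref{def:following-map} holds vacuously, so $\inst_0$ follows $\map$. For the inductive step I fix a fair schedule of triggers and, assuming $\inst_i$ follows $\map$, let $\pi_i$ be the next scheduled active trigger. \cref{lem:main-technical} then furnishes a disjunct, i.e. some $\inst_{i+1} \in \apply{\inst_i, \pi_i}$, that again follows $\map$; I pick this $\inst_{i+1}$. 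Since each application deactivates its own trigger and the schedule is fair, no trigger remains active indefinitely, so $\set{\inst_i}_i$ is a bona fide fair derivation sequence and $\ch$ is a well-defined element of the chase.

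It then remains to derive a homomorphism from $\unravel_n(\ch)$ to $\map$ from the fact that every finite stage $\inst_i$ follows $\map$. Writing $T_i := (\trim_n \circ \unfold)(\inst_i)$ and $T_\omega := (\trim_n \circ \unfold)(\ch)$, one first checks that $\unfold$ and $\trim_n$ commute with the directed union -- a bounded-length path of $\ch$ already occurs in some $\inst_i$, and terms are named by their generating paths -- so that $T_\omega = \bigcup_i T_i$. By \cref{cor:trim-bound-r-paths} the instance $\unravel_n(\ch) = \core(T_\omega)$ has bounded depth and is finite; being a sub-instance of $T_\omega$, its finitely many atoms all appear at some single stage, so $\core(T_\omega) \subseteq T_m$ for a suitable $m$. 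The core retraction $T_\omega \to \core(T_\omega)$ restricts to a homomorphism $T_m \to \core(T_\omega)$, while the inclusion $\core(T_\omega) \subseteq T_m$ is a homomorphism in the opposite direction; hence $T_m$ and $\core(T_\omega)$ are homomorphically equivalent and share a core, giving $\unravel_n(\inst_m) = \core(T_m) \cong \core(T_\omega) = \unravel_n(\ch)$. As $\inst_m$ follows $\map$, the instance $\unravel_n(\inst_m)$ maps homomorphically into $\map$, and therefore so does $\unravel_n(\ch)$.

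The hard part is exactly this last transfer. Because $\core$ is not monotone and does not commute with directed unions, knowing that each $\unravel_n(\inst_i)$ maps into $\map$ does not immediately yield a homomorphism out of $\unravel_n(\ch)$; the homomorphisms produced at the individual stages need not cohere into one on the limit. The idea that rescues the argument is that $\unravel_n(\ch)$ is finite by \cref{cor:trim-bound-r-paths}, so it is already realized -- up to homomorphic equivalence, and hence up to core-isomorphism -- at some finite stage $\inst_m$, at which point the stage-$m$ homomorphism into $\map$ suffices. Everything else, namely the base case, the inductive use of \cref{lem:main-technical}, and the fair scheduling of triggers, is routine bookkeeping.
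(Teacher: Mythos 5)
Your proposal takes essentially the same route as the paper: the paper's entire proof is the one-liner ``by inductively applying \cref{obs:chase-has-minimal} and \cref{lem:main-technical}'', i.e.\ exactly your construction of a fair derivation sequence $\set{\inst_i}_{i\in\nats}$ starting from $\dbz$ in which every stage follows $\map$, with the disjunct at each step chosen via \cref{lem:main-technical}. Your extra work at the limit --- observing that $\core$ does not commute with directed unions and instead arguing that the finite instance $\unravel_n(\ch)$ is already realized, up to core-isomorphism, inside some finite stage $T_m$ whose homomorphism into $\map$ can be reused --- is a correct filling-in of a step the paper leaves entirely implicit, not a departure from its approach.
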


Finally, as $\unravel$ preserves the nullary atom $\predcontr$, we obtain Lemma~\ref{lem:map-to-chase} as a corollary of the above.

\section{Concluding Remarks}\label{sec:conclusions}

In this paper, we provided a novel algorithm that uniformly decides quasi-dense logics -- a substantial subclass of `modal reduction logics' (i.e. logics axiomatized by extending $\logick$ with modal reduction principles). Our method relied on an established correspondence between members of the disjunctive chase and finite instances, called templates, which encode Kripke models witnessing the satisfiability of a modal formula. In the process, we provided a toolkit of model-theoretic operations which we aim to extend and adapt as required to approach the general decidability problem of modal reduction logics. We remark that our result is orthogonal to the other, notable class of modal reduction logics shown decidable by Zakharyaschev~\cite{Zak92}, and thus we have taken a meaningful step forward and improved our understanding of this longstanding open problem.

%We think that our result should be viewed in a broader scope of attacking the decidability problem of $\logick$ with modal reduction principles. Therefore, in our concluding remark, we would like to mostly address potential angles of the approach to the general problem as well as shortly discuss reasons for the failure of existing methods at proving the general problem. There appears to be a severe difficulty gap between the known sub-cases of the problem and it's general form. We find, that some of the problems regarding the inability to close this gap comes from what we perceive as an over-reliance on the properties obtained from enforcing transitivity. Speaking in terms of our proof, transitivity sways the problem towards $\phi$'s invalidity, which in turn, trivializes some of the challenging cases highlighted in \cref{fig:transitivity}. Thus, in our own eyes, we have crossed significant part of the aforementioned gap. And from the vantage point we have reached, it appears that there are still at least two important gaps to cross. First, is attempting to solve the quasi-transitivity case, while second is combining any potential techniques from that case with techniques obtained in this paper. 

%We would also like to stress that our current approach is not trivially extendable to the general case. While, this might not be immediately apparent, the unravelings of chase in general might not be finite objects even though $\trim$ is applied during the process, this is the case as the chase might contain infinite $\rpred$ paths.

\begin{acks}
Work supported by the European Research Council (ERC) Consolidator Grant 771779 %\textit{A Grand Unified Theory of Decidability in Logic-Based Knowledge Representation} 
 (DeciGUT).
\end{acks}
  
\bibliographystyle{abbrv}
\bibliography{bibliography}

\begin{thebibliography}{10}

\bibitem{AbiteboulHV95}
S.~Abiteboul, R.~Hull, and V.~Vianu.
\newblock {\em Foundations of Databases}.
\newblock Addison-Wesley, 1995.

\bibitem{AlpSch85}
B.~Alpern and F.~B. Schneider.
\newblock Defining liveness.
\newblock {\em Information Processing Letters}, 21(4):181--185, 1985.

\bibitem{BagLecMugSal11}
J.-F. Baget, M.~Leclère, M.-L. Mugnier, and E.~Salvat.
\newblock On rules with existential variables: Walking the decidability line.
\newblock {\em Artificial Intelligence}, 175(9):1620--1654, 2011.

\bibitem{BlaRijVen01}
P.~Blackburn, M.~de~Rijke, and Y.~Venema.
\newblock {\em Modal Logic}, volume~53 of {\em Cambridge Tracts in Theoretical
  Computer Science}.
\newblock Cambridge University Press, 2001.

\bibitem{Bro11}
J.~Broersen.
\newblock Deontic epistemic stit logic distinguishing modes of mens rea.
\newblock {\em Journal of Applied Logic}, 9(2):137--152, 2011.
\newblock Special Issue: Selected and revised papers from the Ninth
  International Conference on Deontic Logic in Computer Science (DEON 2008).

\bibitem{disjunctive-chase-termination}
D.~Carral, I.~Dragoste, and M.~Krötzsch.
\newblock Restricted chase (non)termination for existential rules with
  disjunctions.
\newblock 2017.

\bibitem{disjunctive-chase-intro}
A.~Deutsch and V.~Tannen.
\newblock Reformulation of xml queries and constraints.
\newblock {\em Lecture Notes in Computer Science (including subseries Lecture
  Notes in Artificial Intelligence and Lecture Notes in Bioinformatics)}, 2572,
  2003.

\bibitem{Gottlob09}
G.~Gottlob.
\newblock Datalog+/-: {A} unified approach to ontologies and integrity
  constraints.
\newblock In V.~D. Antonellis, S.~Castano, B.~Catania, and G.~Guerrini,
  editors, {\em Proceedings of the 17th Italian Symposium on Advanced Database
  Systems, (SEBD'09)}, pages 5--6. Edizioni Seneca, 2009.

\bibitem{HellNes92}
P.~Hell and J.~Nešetřil.
\newblock The core of a graph.
\newblock {\em Discrete Mathematics}, 109(1):117--126, 1992.

\bibitem{LyoOst22}
T.~S. Lyon and P.~Ostropolski{-}Nalewaja.
\newblock Connecting proof theory and knowledge representation: Sequent calculi
  and the chase with existential rules.
\newblock In P.~Marquis, T.~C. Son, and G.~Kern{-}Isberner, editors, {\em
  Proceedings of the 20th International Conference on Principles of Knowledge
  Representation and Reasoning (KR 2023)}, pages 769--773, 2023.

\bibitem{MeyMeyHoe04}
J.-J.~C. Meyer, J.-J.~C. Meyer, and W.~van~der Hoek.
\newblock {\em Epistemic logic for AI and computer science}.
\newblock Number~41. Cambridge University Press, 2004.

\bibitem{SalMug96}
E.~Salvat and M.-L. Mugnier.
\newblock Sound and complete forward and backward chainings of graph rules.
\newblock In P.~W. Eklund, G.~Ellis, and G.~Mann, editors, {\em Proceedings of
  the 4th International Conference on Conceptual Structures (ICCS'96)}, volume
  1115 of {\em LNCS}, pages 248--262. Springer, 1996.

\bibitem{TiuIanGor12}
A.~Tiu, E.~Ianovski, and R.~Gor{\'{e}}.
\newblock Grammar logics in nested sequent calculus: Proof theory and decision
  procedures.
\newblock In T.~Bolander, T.~Bra{\"{u}}ner, S.~Ghilardi, and L.~S. Moss,
  editors, {\em Advances in Modal Logic 9, papers from the ninth conference on
  "Advances in Modal Logic"}, pages 516--537. College Publications, 2012.

\bibitem{Ben76}
J.~van Benthem.
\newblock Modal reduction principles.
\newblock {\em The Journal of Symbolic Logic}, 41(2):301–312, 1976.

\bibitem{BlaBenWol07}
F.~Wolter and M.~Zakharyaschev.
\newblock Modal decision problems.
\newblock In P.~Blackburn, J.~{Van Benthem}, and F.~Wolter, editors, {\em
  Handbook of Modal Logic}, volume~3 of {\em Studies in Logic and Practical
  Reasoning}, pages 427--489. Elsevier, 2007.

\bibitem{Zak92}
M.~Zakharyaschev.
\newblock Canonical formulas for k4. part i: Basic results.
\newblock {\em Journal of Symbolic Logic}, 57(4):1377–1402, 1992.

\end{thebibliography}

\end{document}